\newcommand{\T}{^\mathrm{T}}
\DeclareMathOperator{\Min}{Min}
\DeclareMathOperator{\Max}{Max}
\DeclareMathOperator{\Prob}{P}
\DeclareMathOperator{\Ex}{E}
\DeclareMathOperator{\Var}{Var}
\DeclareMathOperator{\Exp}{exp}
\newtheorem{theorem}{Theorem}[section]
\newtheorem{corollary}[theorem]{Corollary}
\newtheorem{lemma}[theorem]{Lemma}
\newtheorem{proposition}[theorem]{Proposition}
\newtheorem{definition}[theorem]{Definition}
\begin{document}

\title{Permutation tests of non-exchangeable null models}

\author{Jeffrey ROACH}
\address{Research Computing Center\\
	University of North Carolina at Chapel Hill\\
	Chapel Hill, NC 27599}
\email{roachjm@email.unc.edu}
\author{William VALDAR}
\address{Department of Genetics\\
        University of North Carolina at Chapel Hill\\
	Chapel Hill, NC 27599}
\email{william.valdar@unc.edu}

\thanks{Research reported in this manuscript was supported by the National Institute of General Medical Sciences of the National Institutes of Health under award number R01 GM104125 to W. Valdar}

\begin{abstract}
Generalizations to the permutation test are introduced to allow for situations in which the null model is not exchangeable.  It is shown that the generalized permutation tests are exact, and a partial converse: that any test function that is exact on all probability densities coincides with a generalized permutation test on a particular region, is established.  A most powerful generalized permutation test is derived in closed form. Approximations to the most powerful generalized permutation test are proposed to reduce the computational burden required to compute the complete test. In particular, an explicit form for the approximate test is derived in terms of a multinomial Bernstein polynomial approximation, and its convergence to the most powerful generalized permutation test is demonstrated. In the case where the determination of $p$-values is of greater interest than testing of hypotheses, two approaches to estimation of significance are analyzed. Bounds on the deviation from significance of the exact most powerful test are given in terms of sample size. For both estimators, as sample size approaches infinity, the estimator converges to the significance of the most powerful generalized permutation test under mild conditions. Applications of generalized permutation testing to linear mixed models are provided.
\end{abstract}

\maketitle

\section{Introduction and Preliminaries}
\subsection{Introduction}

The permutation test is an intuitive and popular method for estimating the level of significance of a statistic under an exchangeable null model. The fact that exchangeability of the observations is the only constraint imposed under the null is a key source of its persistent appeal. Parametric alternatives such as the likelihood ratio test, by contrast, rely on specifying a sampling density that may be conditionally exchangeable but that also imposes probabilistic constraints on the conditional distribution of the observations. These distributional constraints are not always reasonable. Under some data settings, for example, exchangeable null densities that are analytically convenient may nonetheless imply an intolerance to outliers that is felt to be unrealistic and likely to produce significance tests that are anti-conservative.

The origins of the permutation methods are found in the work of \cite{Fisher} and \cite{Pitman} in formulations of exact tests of significance where the underlying model distribution is of unspecified form. It was demonstrated by \cite{Scheffe} that for a large class of problems, permutation methods of Fisher and Pitman are necessary conditions in defining exact tests of significance. At the time that permutation tests were introduced, the amount of computation required to perform these tests greatly limited their application, despite theoretical results suggesting their power and effectiveness. Results of \cite{Welch} and \cite{Hoeffding1} show that permutation methods can achieve power approaching that obtained in parametric approaches.  Furthermore results of \cite{WaldWolf} and \cite{Hoeffding2} establish limiting distributions of statistics derived from permutation methods.

With the advent and near ubiquity of high-speed digital computers, however, the amount of computation required has become far less of a concern. Indeed, the intuitive appeal of permutation tests, in particular the fact that only exchangeability need be specified as opposed to a complete parametric form, has led to applications of permutation methods when other methods may be more appropriate.

The specialization of permutation methods appropriate for specific conditions where exchangeability under the null is not a reasonable assumption have been developed by a number of authors.  In particular, \cite{Schmoyer} considers the case where permutations are applied to the residuals of certain linear models where it is shown that test statistics for these models are asymptotically normal.  Generalized linear mixed-models are considered in \cite{Fitzmaurice} and \cite{LeeBraun}. \cite{AnderRob} and \cite{AnderTerBraak} consider modifications to permutation methods for partial tests in multiple linear regression and multi-factorial ANOVA, respectively.  These works investigate both the restriction of permutations to exchangeable sub-groupings and application of permutations to units other than the raw observations, such as the residuals. 

In this communication the permutation test construction is generalized allowing for non-exchangeability in the null model. The first section concludes with mathematical preliminaries and a brief review of results in the case where the null model is assumed to be exchangeable.  The second section defines a generalized permutation test for a given non-exchangeable null model.  These generalized permutation tests reduce to the traditional permutation test in the specialized case where the null model is exchangeable.  It is shown that the generalized permutation tests are exact and conversely that any test that is exact for all probability distributions coincides in large part to a generalized permutation test. The section continues with a construction of the most powerful generalized permutation test by reducing to a canonical problem in linear optimization. Connections with most powerful tests characterized by the Neyman-Pearson lemma are discussed briefly.  In the following section generalized permutation test analysis applied to linear mixed models is considered.

As in the case of the permutation test for exchangeable null models, the amount of computation required to perform the test can be considerable.  Therefore approximate tests are of fundamental interest.  The third section considers approximation of the most powerful generalized permutation test, and relates multinomial generalizations of Bernstein polynomial approximations to a Monte Carlo approximation.  It is shown that this approximation converges to the most powerful generalized permutation test as the number of permutations sampled with replacement approaches infinity when the permutations are sampled uniformly.

In practice, formal hypothesis testing may be of less interest than significance testing, {\em i.e.} the estimation of $p$-values.  In the final section, the most powerful generalized permutation test is reformulated in terms of an estimate of $p$-values.  Again, approximation approaches are central to practical application.  Two schemes are given: a direct approach and an indirect approach.  In the direct approach permutations are sampled with replacement such that each permutation is equally likely to be selected.  Its contribution to the estimator, however, differs based on the relative likelihood of the null model for that permutation.  Bounds in terms of number of permutations sampled are given, and convergence to the $p$-value based on the complete generalized permutation test is demonstrated.

The indirect approach introduces a more analytically tractable estimator than that of the direct approach.  Here permutations are sampled with replacement non-uniformly.  The contribution to the estimator, however, depends only on whether the relative likelihood for that permutation is greater than that of the identify permutation.  Bounds in terms of the number of sampled permutations are established and convergence as the number of sample permutations increases to infinity is proven for the estimator derived from the indirect approach.

\subsection{Preliminaries}

To formalize the notion of applying a permutation to a vector or a vector function, let $\pi$ be some permutation taken from the group of all permutations $\mathfrak{S}_n$ applied to the coordinate axes of $\mathbb{R}^n$.  The effect of $\pi$ on the vector $\mathbf{x}$ is a reordering of the indices of the coordinates of $\mathbf{x}$, {\em i.e.}:
$$ \pi \mathbf{x} = \pi \left( x_1, x_2, \ldots, x_n\right) = \left( x_{\pi\left(1\right)},
 x_{\pi\left(2\right)}, \ldots, x_{\pi\left(n\right)}\right) $$
for a particular $\mathbf{x} \in \mathbb{R}^n$.

The permutation of coordinates induces a natural decomposition of the space $\mathbb{R}^n$ into a finite number of subsets.  This decomposition is central to the analysis of permutation tests. For a particular $\mathbf{x} \in \mathbb{R}^n$, consider the images of $\mathbf{x}$ under all permutations of coordinates.  If this set of images consists of $n!$ distinct values, then there exists exactly one permutation $\pi$ whose application will result in a vector whose coordinates are in strictly ascending order.  That is, there exists $\pi \in \mathfrak{S}_n$ where $\pi \mathbf{x}$ is such that: $x_{\pi\left(1\right)} < x_{\pi\left(2\right)} < \ldots < x_{\pi\left(n\right)}$.  Let $T$ be the set of points in $\mathbb{R}^n$ whose coordinates are in strictly ascending order, {\em i.e.}:
$$T = \left\{ \mathbf{x} \in \mathbb{R}^n : x_1 < x_2 < \ldots < x_n \right\}.$$
Images of $T$ under the action of distinct elements of $\mathfrak{S}_n$ are disjoint; that is,
$$ \pi_i T \cap \pi_j T = \emptyset \quad\text{for}\quad \pi_i \neq \pi_j \,.$$
Conversely let the set $E$ be the set of points in $\mathbb{R}^n$ where two or more of the coordinates are equal, {\em i.e.}:
$$E = \left\{ \mathbf{x} \in \mathbb{R}^n : x_i = x_j \text{ for some } i \neq j\right\}.$$ 
In terms of permutations, the set $E$ consists of those points in $\mathbb{R}^n$ that are fixed by two or more elements of $\mathfrak{S}_n$; that is,
$$E=\{\mathbf{x}\in\mathbb{R}^n: \mathbf{x}=\pi_i \mathbf{x} \quad\text{for}\quad \pi_i\ne 1\} \,.$$
Geometrically, the points of $E$ correspond to the edges between the images of $T$ under different permutations. Therefore, the space $\mathbb{R}^n$ is partitioned into a finite number of subsets:
\begin{equation}\label{eq:decomp}
 \mathbb{R}^n = \bigcup_{\pi \in \mathfrak{S}_n} \pi T \cup E
\end{equation}
where $\pi_i T \cap \pi_j T = \emptyset$ for $\pi_i \neq \pi_j$.  Note that the set $E$ has Lebesgue measure zero.

An arbitrary subset $\Omega \subseteq \mathbb{R}^n$ is said to be $\mathfrak{S}_n$-{\em symmetric} if it contains the image under all permutations of every point in the subset;  that is,
$$ \left\{ \pi \mathbf{x} : x \in \Omega \text{ and } \pi \in \mathfrak{S}_n\right\} \subseteq \Omega.$$
Note that, from this definition, $\mathbb{R}^n$ is $\mathfrak{S}_n$-symmetric.  Throughout this communication, we will assume the somewhat more general situation where the sample space, $\Omega$, is not necessarily $\mathbb{R}^n$ but some$\mathfrak{S}_n$-symmetric and Lebesgue-measurable subset.  Let $\Omega_T$ denote $T \cap \Omega$.

Permutation tests are defined to be test functions that satisfy a particular algebraic relationship between points in the sample space and the images of those points under permutation.  In the case where a test function is not randomized and therefore takes only the values $0$ or $1$, then a permutation test is a test function where for a given point, the test function will take the value $1$ for $\alpha$ of images of that point under permutation and will take the value $0$ for the remaining $1-\alpha$ images under permutation.  In the more general setting, allowing randomized test functions, a {\em permutation test} is defined as follows:
\begin{definition}\label{def:perm-test} For any $\alpha \in \left[0,1\right]$, a function $\varphi : \Omega \rightarrow \left[ 0,1\right]$ satisfying the property:
$$ \frac{1}{n!}\sum_{\pi \in \mathfrak{S}_n} \varphi\left(\pi \mathbf{x}\right) = \alpha \quad \text{a.e.}$$
is said to be a {\em permutation test} at significance level $\alpha$.
\end{definition}

Of particular interest in the study of permutation tests are functions that remain unchanged by the application of all permutations.  Such functions are said to be {\em exchangeable}:
\begin{definition} A function $f : \Omega \rightarrow \mathbb{R}$ is said to be {\em exchangeable} if $f\left(\pi \mathbf{x}\right) = f\left(\mathbf{x}\right)$ for all $\pi \in \mathfrak{S}_n$ and all $\mathbf{x} \in \Omega$.  That is, for all $\mathbf{x} \in \Omega$, $f$ takes the same value for all permutations of $\mathbf{x}$.
\end{definition}

Exchangeable functions enjoy a number of properties related to the invariance under permutation of coordinate axes.  In particular:

\begin{lemma}\label{lem:initiallemma} For $g_0 : \mathbb{R}^n \rightarrow \mathbb{R}$ exchangeable and Lebesgue-measurable:
\begin{itemize}
\item[a)] For any Lebesgue-measurable set $G \subseteq \mathbb{R}^n$ and permutation $\pi \in \mathfrak{S}_n$:
$$ \int_{\pi G} g_0\,d\mu = \int_{G} g_0\,d\mu.$$
\end{itemize}
Additionally for $\mathfrak{S}_n$-symmetric and Lebesgue measurable $\Omega$:
\begin{itemize}
\item[b)] For $g_0 : \Omega \rightarrow \mathbb{R}$:
$$ \int_\Omega g_0\,d\mu = n!\int_{\Omega_T} g_0\,d\mu. $$
\item[c)] For $\varphi$ a permutation test at significance level $\alpha$:
$$ \int_\Omega g_0\varphi\,d\mu = \alpha n! \int_{\Omega_T} g_0\,d\mu. $$
\end{itemize}
\end{lemma}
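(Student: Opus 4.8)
The plan is to prove the three parts in order, with (a) serving as the computational engine for (b) and (c). The entire argument rests on one observation: each $\pi \in \mathfrak{S}_n$ acts on $\mathbb{R}^n$ as a permutation matrix, whose determinant is $\pm 1$, so $\pi$ is a measure-preserving linear bijection. I would begin part (a) by applying the change-of-variables formula to $\int_{\pi G} g_0\,d\mu$ under the substitution $\mathbf{x} = \pi \mathbf{y}$: since $|\det \pi| = 1$ the Jacobian factor is trivial, and as $\mathbf{x}$ ranges over $\pi G$ the variable $\mathbf{y} = \pi^{-1}\mathbf{x}$ ranges over $G$, giving $\int_{\pi G} g_0(\mathbf{x})\,d\mu = \int_G g_0(\pi \mathbf{y})\,d\mu$. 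Exchangeability of $g_0$ then replaces $g_0(\pi \mathbf{y})$ by $g_0(\mathbf{y})$, and (a) follows.

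For part (b), I would intersect the partition \eqref{eq:decomp} with $\Omega$. Using $\mathfrak{S}_n$-symmetry of $\Omega$ (equivalently $\pi^{-1}\Omega = \Omega$) one obtains $\Omega \cap \pi T = \pi(\Omega \cap T) = \pi \Omega_T$, so $\Omega$ is the disjoint union of the $n!$ sets $\pi \Omega_T$ together with $\Omega \cap E$. Because $E$ has Lebesgue measure zero its contribution to the integral vanishes, and countable (here finite) additivity gives $\int_\Omega g_0\,d\mu = \sum_{\pi \in \mathfrak{S}_n} \int_{\pi \Omega_T} g_0\,d\mu$. Applying part (a) with $G = \Omega_T$ makes every summand equal to $\int_{\Omega_T} g_0\,d\mu$, and there are $n!$ of them, yielding (b).

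Part (c) reuses the same decomposition but retains the factor $\varphi$. On each cell I would change variables by $\pi$ exactly as in (a): since $g_0$ is exchangeable it pulls back to $g_0$ on $\Omega_T$, whereas $\varphi$ — which is \emph{not} assumed exchangeable — pulls back to $\varphi(\pi\,\cdot\,)$. Summing over $\pi$ and interchanging the finite sum with the integral gives $\int_\Omega g_0\varphi\,d\mu = \int_{\Omega_T} g_0(\mathbf{y})\bigl(\sum_{\pi} \varphi(\pi \mathbf{y})\bigr)\,d\mu$. The defining property of a permutation test (Definition \ref{def:perm-test}) is precisely that the inner sum equals $\alpha\,n!$ almost everywhere, so it may be factored out of the integral, delivering (c).

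The individual steps are routine; the only points demanding care are the justification that permutations preserve Lebesgue measure, so that the change of variables introduces no Jacobian, and the bookkeeping of the exceptional set $E$ — both verifying that $\Omega \cap E$ is genuinely negligible and confirming that discarding it is legitimate. In part (c) one must additionally note that the ``almost everywhere'' qualifier in Definition \ref{def:perm-test} is enough, since it can fail only on a null set, which does not affect the value of the integral over $\Omega_T$.
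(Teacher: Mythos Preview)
Your proposal is correct and follows essentially the same approach as the paper: change of variables with unit Jacobian for (a), the decomposition \eqref{eq:decomp} together with (a) for (b), and the same decomposition plus the defining identity of a permutation test for (c). You are in fact slightly more explicit than the paper about why $\Omega \cap \pi T = \pi\Omega_T$, why the exceptional set $E$ is harmless, and why the ``a.e.''\ in Definition~\ref{def:perm-test} suffices, but these are elaborations of the same argument rather than a different route.
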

\begin{proof}
Statement a) follows from a change of variables applied to $g_0$ exchangeable:
$$ \int_{\pi G} g_0\,d\mu = \int_G g_0\left(\pi \mathbf{x}\right)\,d\mu\left(\mathbf{x}\right) = \int_G g_0\,d\mu $$
since the Jacobian of the linear operation corresponding to the permutation $\pi$ has absolute value one.

Statement b) follows from the application of a) to the decomposition (\ref{eq:decomp}):
$$ \int_\Omega g_0\,d\mu = \sum_{\pi \in \mathfrak{S}_n} \int_{\pi \Omega_T} g_0\,d\mu = n!
 \int_{\Omega_T} g_0\,d\mu .$$

Statement c) follows from b) and the exchangeability of $g_0$:
\begin{align*}
 \int_\Omega g_0 \varphi\,d\mu &= \sum_{\pi \in \mathfrak{S}_n} \int_{\pi \Omega_T} g_0 \varphi\,d\mu \\
      &= \sum_{\pi \in \mathfrak{S}_n} \int_{\Omega_T} g_0\left(\pi \mathbf{x}\right)
 \varphi\left(\pi \mathbf{x}\right)\,d\mu\left(\mathbf{x}\right)\\
      &= \int_{\Omega_T} g_0\left(\mathbf{x}\right) \sum_{\pi \in \mathfrak{S}_n}
 \varphi\left(\pi \mathbf{x}\right)\,d\mu\left(\mathbf{x}\right)\\
      &= \alpha n! \int_{\Omega_T} g_0\,d\mu
\end{align*}
as required.
\end{proof}

In the case where $g_0$ represents a probability density, then part c) of the theorem can be interpreted as the statement that the permutation test $\varphi$ is exact:

\begin{corollary}[Scheff\'e] If $\varphi$ is a permutation test at significance level $\alpha$ and $g_0$ is an exchangeable probability density, then
$$ \int_\Omega g_0 \varphi\,d\mu = \alpha$$
{\em i.e.} $\varphi$ is exact (\cite{Lehmann}, Theorem 5.8.1), (\cite{Scheffe}, Theorem 2).
\end{corollary}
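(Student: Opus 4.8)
The plan is to derive this corollary directly from parts b) and c) of Lemma~\ref{lem:initiallemma}, since those statements already do essentially all the work. The only genuinely new ingredient is the hypothesis that $g_0$ is a probability density rather than an arbitrary exchangeable integrable function, so the strategy is to exploit the normalization $\int_\Omega g_0\,d\mu = 1$ to eliminate the common factor $n!\int_{\Omega_T} g_0\,d\mu$ appearing in both identities.

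First I would invoke part c) of the lemma, which applies since $g_0$ is exchangeable and Lebesgue-measurable and $\varphi$ is a permutation test at level $\alpha$, to write
$$ \int_\Omega g_0\varphi\,d\mu = \alpha\, n!\int_{\Omega_T} g_0\,d\mu. $$
Next I would apply part b) to the same $g_0$ to obtain $\int_\Omega g_0\,d\mu = n!\int_{\Omega_T} g_0\,d\mu$. Because $g_0$ is a probability density supported on the $\mathfrak{S}_n$-symmetric sample space $\Omega$, the left-hand side equals $1$, so that $n!\int_{\Omega_T} g_0\,d\mu = 1$, i.e.\ the quantity $\int_{\Omega_T} g_0\,d\mu$ equals $1/n!$.

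Substituting this evaluation of $n!\int_{\Omega_T} g_0\,d\mu$ into the expression from part c) collapses the right-hand side to $\alpha\cdot 1 = \alpha$, yielding $\int_\Omega g_0\varphi\,d\mu = \alpha$, which is exactly the claimed exactness. No approximation or limiting argument is needed; the result is a one-line consequence once the two integral identities are in hand.

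There is no serious obstacle here, as the corollary is essentially a specialization of the lemma. The only point requiring a moment of care is the implicit assumption that the probability density $g_0$ integrates to $1$ over $\Omega$ specifically (and not merely over $\mathbb{R}^n$); this is consistent because an exchangeable density on the $\mathfrak{S}_n$-symmetric set $\Omega$ places all its mass inside $\Omega$, so the normalization used above is legitimate. The probabilistic interpretation then follows: under the null density $g_0$, the probability of rejection $\Ex[\varphi] = \int_\Omega g_0\varphi\,d\mu$ equals the nominal level $\alpha$, establishing that the permutation test controls the type~I error exactly.
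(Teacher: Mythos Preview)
Your proposal is correct and is exactly the argument the paper has in mind: the corollary is stated without proof precisely because it drops out of parts b) and c) of Lemma~\ref{lem:initiallemma} together with the normalization $\int_\Omega g_0\,d\mu = 1$.
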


\section{Generalized Permutation Test Functions}

\subsection{Definition of Generalized Permutation Test Functions and Exactness}

The exactness of permutation tests on exchangeable probability distributions provides the impetus for the definition of generalized permutation tests.  Ultimately it will be shown that a generalized permutation test is exact on arbitrary probability densities.  The definition of the generalized permutation test is given in terms of a particular exchangeable function, $h_{g_0}$.  Let $h_{g_0}$, or $h$ where it is clear from context, denote the exchangeable function:
$$  h_{g_0}\left(\mathbf{x}\right) =  \frac{1}{n!} \sum_{\pi \in \mathfrak{S}_n} g_0\left(\pi \mathbf{x}\right).$$
for $g_0$ an arbitrary real-valued function of $\Omega$.

Note that on symmetric sets, $g_0$ and $h$ satisfy the following elementary properties:
\begin{lemma}\label{lem:simplelemma} Let $\Omega$ be a symmetric, measurable set.  Then for $g_0 : \Omega \rightarrow \mathbb{R}$ positive and measurable:
\begin{itemize}
\item[a)] 
$$ \int_{\Omega} g_0\,d\mu = \int_{\Omega} h\,d\mu.$$
\item[b)] If there exists a subset of $\Omega$ of non-zero measure where $h > 0$, then there exists a subset of $\Omega$ of non-zero measure where $g_0>0$.
\end{itemize}
\end{lemma}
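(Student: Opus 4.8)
The plan is to prove a) by a symmetrization-plus-change-of-variables argument, and then to obtain b) as an easy consequence of a) together with nonnegativity. Throughout I read \emph{positive} in the hypothesis as $g_0 \ge 0$; this is what makes b) a nontrivial assertion (if $g_0$ were strictly positive the conclusion would reduce to $\mu(\Omega)>0$), and it also guarantees that every integral below is a well-defined element of $[0,\infty]$, so no separate integrability hypothesis is needed.

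For part a), I would first expand the definition of $h$ and move the finite sum outside the integral, which is justified by linearity of the integral for nonnegative functions:
$$ \int_\Omega h\,d\mu = \frac{1}{n!}\sum_{\pi \in \mathfrak{S}_n} \int_\Omega g_0\left(\pi \mathbf{x}\right)\,d\mu\left(\mathbf{x}\right). $$
For each fixed $\pi$ I would then apply the change of variables $\mathbf{y} = \pi\mathbf{x}$. This is exactly the first half of the computation in Lemma \ref{lem:initiallemma} a) --- the Jacobian of the linear map $\pi$ has absolute value one --- except that here I use only the change of variables and not the exchangeability of $g_0$, which does not hold. Because $\Omega$ is $\mathfrak{S}_n$-symmetric one has $\pi\Omega = \Omega$ for every $\pi$ (symmetry applied to $\pi^{-1}$ gives the reverse inclusion), so each summand equals $\int_{\pi\Omega} g_0\,d\mu = \int_\Omega g_0\,d\mu$. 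Summing the $n!$ identical terms and cancelling the factor $1/n!$ yields $\int_\Omega h\,d\mu = \int_\Omega g_0\,d\mu$.

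For part b), I would argue directly from a). The assumption that $h>0$ on a set of non-zero measure gives, via $\{h>0\} = \bigcup_k \{h>1/k\}$, some $k$ with $\mu\left(\{h>1/k\}\right)>0$; since $h \ge 0$ everywhere, this forces $\int_\Omega h\,d\mu \ge \tfrac{1}{k}\mu\left(\{h>1/k\}\right) > 0$. By part a), $\int_\Omega g_0\,d\mu = \int_\Omega h\,d\mu > 0$, and because $g_0 \ge 0$ a positive integral is incompatible with $g_0 = 0$ almost everywhere. Hence $g_0>0$ on a set of non-zero measure. (Equivalently, one may phrase this as the contrapositive: if $g_0 = 0$ a.e., then each composition $g_0\circ\pi$ vanishes a.e. by measure-preservation, and the finite average $h$ therefore vanishes a.e.)

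I do not expect a substantive obstacle; both parts are routine once the measure-preserving change of variables is in hand. The only items requiring explicit care are (i) noting that $h$ is measurable, being a finite sum of the measurable compositions $g_0\circ\pi$, and (ii) being explicit that it is the $\mathfrak{S}_n$-symmetry of $\Omega$ that delivers $\pi\Omega = \Omega$, ensuring the change of variables returns an integral over $\Omega$ itself rather than over a permuted domain.
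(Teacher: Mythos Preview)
Your proof is correct. The paper actually states this lemma without proof, calling the properties ``elementary'', so there is no argument in the paper to compare against; your symmetrization-and-change-of-variables computation for a) and the integral-positivity argument for b) are exactly the natural way to supply the omitted details, and your explicit reading of ``positive'' as $g_0 \ge 0$ is the right one for b) to have content.
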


The definition of a {\em generalized permutation test function} with respect to any arbitrary positive real-valued function $g_0$ is then:
\begin{definition} For $\alpha \in \left[0,1\right]$, $g_0 : \Omega \rightarrow \mathbb{R}$, and $ h_{g_0}\left(\mathbf{x}\right) =  \frac{1}{n!} \sum_{\pi \in \mathfrak{S}_n} g_0\left(\pi \mathbf{x}\right)$ non-zero; a function $\varphi : \Omega \rightarrow \left[ 0,1\right]$ satisfying the property:
\begin{displaymath}
\frac{1}{n!}\displaystyle \sum_{\pi \in \mathfrak{S}_n} 
  \frac{g_0\left( \pi \mathbf{x} \right)}{h\left(\mathbf{x}\right)} \,\,\,
 \varphi\left(\pi \mathbf{x}\right) = \alpha \quad \text{a.e.}
\end{displaymath}
is said to be a  {\em generalized permutation test at significance level} $\alpha$ with respect to $g_0$.
\end{definition}

Lemma \ref{lem:initiallemma} for exchangeable, positive real-valued functions, has the following generalization in the non-exchangeable case:
\begin{lemma} For $g_0 : \Omega \rightarrow \mathbb{R}$ and $\varphi$ a generalized permutation test of significance $\alpha$:
$$ \int_\Omega g_0\varphi\,d\mu = \alpha n! \int_{\Omega_T} h\,d\mu $$
where  $h_{g_0}\left(\mathbf{x}\right) =  \frac{1}{n!} \sum_{\pi \in \mathfrak{S}_n} g_0\left(\pi \mathbf{x}\right)$ non-zero {\em a.e.}
\end{lemma}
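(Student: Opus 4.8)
The plan is to mirror the proof of Lemma~\ref{lem:initiallemma}(c), replacing the exchangeability of the integrand by the defining identity of the generalized permutation test. First I would use the decomposition (\ref{eq:decomp}) of $\Omega$ into the disjoint images $\pi\Omega_T$ together with the null set $E$ to write
\[
\int_\Omega g_0\varphi\,d\mu = \sum_{\pi \in \mathfrak{S}_n}\int_{\pi\Omega_T} g_0\varphi\,d\mu,
\]
the contribution of $E$ vanishing since $\mu(E)=0$.

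Next, in each summand I would apply the change of variables $\mathbf{x}\mapsto\pi\mathbf{x}$. As in Lemma~\ref{lem:initiallemma}(a), the Jacobian of a coordinate permutation has absolute value one, so $\int_{\pi\Omega_T} g_0\varphi\,d\mu = \int_{\Omega_T} g_0(\pi\mathbf{x})\,\varphi(\pi\mathbf{x})\,d\mu(\mathbf{x})$. Since the index set $\mathfrak{S}_n$ is finite and $\varphi$ is bounded, I may interchange the (finite) sum with the integral and collect the terms into a single integrand:
\[
\int_\Omega g_0\varphi\,d\mu = \int_{\Omega_T}\sum_{\pi\in\mathfrak{S}_n} g_0(\pi\mathbf{x})\,\varphi(\pi\mathbf{x})\,d\mu(\mathbf{x}).
\]

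The key step is to recognise the inner sum through the defining relation for a generalized permutation test. Since $h(\mathbf{x})$ is non-zero a.e., I can factor it out of the sum and apply the defining identity $\frac{1}{n!}\sum_{\pi}\frac{g_0(\pi\mathbf{x})}{h(\mathbf{x})}\varphi(\pi\mathbf{x}) = \alpha$:
\[
\sum_{\pi\in\mathfrak{S}_n} g_0(\pi\mathbf{x})\,\varphi(\pi\mathbf{x}) = h(\mathbf{x})\sum_{\pi\in\mathfrak{S}_n}\frac{g_0(\pi\mathbf{x})}{h(\mathbf{x})}\,\varphi(\pi\mathbf{x}) = \alpha\,n!\,h(\mathbf{x}) \quad\text{a.e.}
\]
Substituting this constant multiple of $h$ and integrating over $\Omega_T$ then yields $\int_\Omega g_0\varphi\,d\mu = \alpha\,n!\int_{\Omega_T} h\,d\mu$, as claimed.

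The steps are individually routine, so the only point requiring care — and the main obstacle — is the measure-theoretic bookkeeping surrounding the two ``a.e.'' hypotheses. The defining identity holds off a $\mu$-null subset $N\subseteq\Omega$, and $h$ vanishes on at most a $\mu$-null set; I would note that the intersection of each such null set with $\Omega_T$ is again null, so both the factoring of $h$ and the substitution of the identity are valid for $\mu$-a.e.\ $\mathbf{x}\in\Omega_T$ and therefore do not affect the integral. One should also confirm integrability of $g_0\varphi$, which is immediate from $0\le\varphi\le 1$ together with the integrability of $g_0$ (and hence of $h$, via Lemma~\ref{lem:simplelemma}(a)), so that the interchange of the finite sum and the integral is justified.
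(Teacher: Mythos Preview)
Your proof is correct and follows essentially the same route as the paper's: decompose $\Omega$ via (\ref{eq:decomp}), change variables in each piece, factor out $h$, and apply the defining identity of the generalized permutation test. Your version is, if anything, more careful than the paper's about the a.e.\ bookkeeping and integrability.
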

\begin{proof}
Following the decomposition (\ref{eq:decomp}):
\begin{align*}
 \int_\Omega g_0 \varphi\,d\mu &= \sum_{\pi \in \mathfrak{S}_n} \int_{\pi \Omega_T} g_0 \varphi\,d\mu \\
      &= \sum_{\pi \in \mathfrak{S}_n} \int_{\Omega_T} g_0\left(\pi \mathbf{x}\right)
 \varphi\left(\pi \mathbf{x}\right)\,d\mu\left(\mathbf{x}\right).
\end{align*}
Introducing $h$:
$$ \int_\Omega g_0 \varphi\,d\mu = \sum_{\pi \in \mathfrak{S}_n} \int_{\Omega_T} h\left(\mathbf{x}\right)
								  \frac{g_0\left(\pi \mathbf{x}\right)}{h\left( \mathbf{x}\right)}
                                  \varphi\left(\pi \mathbf{x}\right)\,d\mu\left(\mathbf{x}\right). $$
By construction, $h$ is exchangeable, therefore:
\begin{align*}
\int_\Omega g_0 \varphi\,d\mu &= \int_{\Omega_T} h\left(\mathbf{x}\right) \sum_{\pi \in \mathfrak{S}_n} 
								  \frac{g_0\left(\pi \mathbf{x}\right)}{h\left( \mathbf{x}\right)}
                                  \varphi\left(\pi \mathbf{x}\right)\,d\mu\left(\mathbf{x}\right) \\
      &= \alpha n! \int_{\Omega_T} h\,d\mu
\end{align*}
completing the proof.
\end{proof}

As in the exchangeable case, when the function $g_0$ is taken to represent the probability density under the null hypothesis, the previous lemma can be interpreted in terms of the statement that generalized permutation tests are exact.
\begin{corollary} If $\varphi$ is a generalized permutation test of significance $\alpha$ and $g_0$ is a probability density with non-zero $h_{g_0}$ {\em a.e.}, then
$$ \int_\Omega g_0 \varphi\,d\mu = \alpha;$$
{\em i.e.} $\varphi$ is exact.
\end{corollary}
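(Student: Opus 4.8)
The plan is to reduce the claim to an evaluation of the normalizing factor $n!\int_{\Omega_T} h\,d\mu$ that appears on the right-hand side of the preceding lemma. Since $\varphi$ is a generalized permutation test of significance $\alpha$ and $h_{g_0}$ is non-zero a.e., the hypotheses of that lemma are met, so I would begin by invoking its conclusion, namely $\int_\Omega g_0\varphi\,d\mu = \alpha n!\int_{\Omega_T} h\,d\mu$. The entire task then collapses to showing that $n!\int_{\Omega_T} h\,d\mu = 1$, after which the desired identity $\int_\Omega g_0\varphi\,d\mu = \alpha$ follows by direct substitution.

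To evaluate this factor I would exploit that $h = h_{g_0}$ is exchangeable by construction. Because $\Omega$ is $\mathfrak{S}_n$-symmetric and Lebesgue-measurable and $h$ is exchangeable and measurable, Lemma \ref{lem:initiallemma}b) applies with the exchangeable function there taken to be $h$, giving $\int_\Omega h\,d\mu = n!\int_{\Omega_T} h\,d\mu$. Thus the normalizing factor equals the total mass $\int_\Omega h\,d\mu$. I would then relate this mass back to $g_0$: by Lemma \ref{lem:simplelemma}a) the symmetrization preserves total integral, so $\int_\Omega h\,d\mu = \int_\Omega g_0\,d\mu$, and since $g_0$ is a probability density the latter integral equals $1$. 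Chaining these equalities yields $n!\int_{\Omega_T} h\,d\mu = \int_\Omega h\,d\mu = \int_\Omega g_0\,d\mu = 1$, as required.

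There is no substantive obstacle here; the statement is a normalization bookkeeping step once the two cited lemmas are in hand. The only point warranting care is confirming that the hypotheses of the auxiliary lemmas genuinely apply, in particular that the probability density $g_0$ satisfies the measurability and sign conditions needed for Lemma \ref{lem:simplelemma}a) and that the symmetrized function $h$ inherits the measurability and exchangeability required to invoke Lemma \ref{lem:initiallemma}b). Both are immediate from the definition of $h$ and the assumption that $g_0$ is a (nonnegative, measurable, unit-mass) density, so the proof reduces to assembling the equalities above in order.
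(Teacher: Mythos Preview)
Your proposal is correct and matches the paper's (implicit) reasoning: the paper states the result as an immediate corollary of the preceding lemma without giving a separate proof, and the only missing step is precisely the normalization $n!\int_{\Omega_T} h\,d\mu = \int_\Omega h\,d\mu = \int_\Omega g_0\,d\mu = 1$ that you supply via Lemma~\ref{lem:initiallemma}b) and Lemma~\ref{lem:simplelemma}a).
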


It is also interesting to note that in the case where the generalized permutation test is defined with respect to a different density, the difference in bias is bounded by the $L^1$-distance between the two densities.  In particular, where $g$ is only known approximately and the generalized permutation test is defined with respect to the approximation $\tilde{g}$, then the difference in bias is bounded by the accuracy of the approximation in terms of $L^1$-distance.
\begin{corollary} Let $g$ and $\tilde{g}$ be probability densities where $\tilde{h}\left(\mathbf{x}\right) =  \frac{1}{n!} \sum_{\pi \in \mathfrak{S}_n} \tilde{g}\left(\pi \mathbf{x}\right)$ is non-zero {\em a.e.}  If 
$$\frac{1}{n!}\displaystyle \sum_{\pi \in \mathfrak{S}_n} 
  \frac{\tilde{g}\left( \pi \mathbf{x} \right)}{\tilde{h}\left(\mathbf{x}\right)} \,\,\,
 \varphi\left(\pi \mathbf{x}\right) = \alpha \quad \text{a.e.},$$
then:
$$ \left| \int_\Omega g \varphi\,d\mu - \alpha \right| \leq \int_\Omega \left| g - \tilde{g} \right|\,d\mu; $$
{\em i.e.} the difference in bias is bounded by the $L^1$-distance on $g$ and $\tilde{g}$.
\end{corollary}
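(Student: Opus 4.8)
The plan is to reduce the statement to the exactness of $\varphi$ with respect to the density against which it was actually constructed, namely $\tilde g$. Since $\varphi$ is by hypothesis a generalized permutation test at level $\alpha$ with respect to $\tilde g$, and $\tilde g$ is a probability density whose symmetrization $\tilde h$ is non-zero almost everywhere, the preceding exactness corollary applies verbatim with $g_0 = \tilde g$ and yields
\begin{displaymath}
\int_\Omega \tilde g\,\varphi\,d\mu = \alpha
\end{displaymath}
exactly. This is the single fact from the earlier results that I would invoke; everything that follows is elementary.

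With this identity in hand, I would replace the constant $\alpha$ in the quantity to be bounded by the integral $\int_\Omega \tilde g\,\varphi\,d\mu$, so that the two integrals can be combined into one:
\begin{displaymath}
\int_\Omega g\,\varphi\,d\mu - \alpha
 = \int_\Omega g\,\varphi\,d\mu - \int_\Omega \tilde g\,\varphi\,d\mu
 = \int_\Omega \left( g - \tilde g \right)\varphi\,d\mu .
\end{displaymath}
The bias discrepancy is thereby expressed as a single integral of the density difference weighted by the test function. This rewriting is the only real idea in the argument.

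It then remains to estimate the absolute value of this integral. Applying the triangle inequality for integrals together with the fact that $\varphi$ takes values in $\left[0,1\right]$, so that $0 \le \varphi \le 1$ pointwise, I would bound
\begin{displaymath}
\left| \int_\Omega \left( g - \tilde g \right)\varphi\,d\mu \right|
 \le \int_\Omega \left| g - \tilde g \right| \varphi\,d\mu
 \le \int_\Omega \left| g - \tilde g \right|\,d\mu ,
\end{displaymath}
which is precisely the claimed $L^1$ bound.

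I do not anticipate a genuine obstacle here: the conclusion is a direct consequence of the exactness corollary combined with the boundedness $0\le\varphi\le 1$. The only points requiring care are purely a matter of bookkeeping, namely confirming that the hypotheses of the exactness corollary are met exactly by $\tilde g$ (a probability density with $\tilde h$ non-zero a.e.), and observing that it is the pointwise bound $\varphi\le 1$ that converts the $\varphi$-weighted integral into the unweighted $L^1$-distance appearing on the right-hand side.
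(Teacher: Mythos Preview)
Your argument is correct and is precisely the intended one: the paper states this result as an immediate corollary without proof, and the natural derivation is exactly what you wrote---apply the preceding exactness corollary to $\tilde g$ to get $\int_\Omega \tilde g\,\varphi\,d\mu=\alpha$, subtract, and use $0\le\varphi\le 1$ to bound the resulting integral by the $L^1$-distance.
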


The next proposition shows that the definition of a generalized permutation test is natural in the sense that any test that is exact for all absolutely continuous probability distributions will be identical to a generalized permutation test on the region where $h$ is non-zero.

\begin{proposition} Let $\varphi : \Omega \rightarrow \left[0,1\right]$ be such that:
\begin{equation}\label{eq:exactness}
\int_\Omega g \varphi\,d\mu = \alpha
\end{equation}
for all probability densities $g$.  Then, for $h\left(\mathbf{x}\right) =  \frac{1}{n!} \sum_{\pi \in \mathfrak{S}_n} g_0\left(\pi \mathbf{x}\right)$:
$$ \frac{1}{n!}\displaystyle \sum_{\pi \in \mathfrak{S}_n} \frac{g\left( \pi \mathbf{x} \right)}{h\left(\mathbf{x}\right)} \,\,\,
 \varphi\left(\pi \mathbf{x}\right) = \alpha \quad \text{a.e.} $$
where $h\left(\mathbf{x}\right) \neq 0$.
\end{proposition}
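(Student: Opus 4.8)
The plan is to exploit the fact that exactness is demanded against \emph{every} probability density $g$, a far stronger requirement than exactness against exchangeable densities alone (the setting of the classical Scheff\'e converse), and I expect this to pin $\varphi$ down almost completely. Writing $f = \varphi - \alpha$, hypothesis (\ref{eq:exactness}) reads $\int_\Omega f g\,d\mu = 0$ for all densities $g$. First I would feed in indicator densities: for any Lebesgue-measurable $A \subseteq \Omega$ of finite positive measure the function $\mathbf{1}_A/\mu(A)$ is a probability density, so (\ref{eq:exactness}) gives $\int_A f\,d\mu = 0$. Since $f$ is bounded, hence locally integrable, and Lebesgue measure is $\sigma$-finite, letting $A$ run over $\{f>0\}\cap B_R$ and $\{f<0\}\cap B_R$ for balls $B_R$ of increasing radius forces $f = 0$ a.e., that is, $\varphi = \alpha$ a.e.\ on $\Omega$.

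Next I would upgrade this to a statement holding along entire permutation orbits. Put $N = \{\mathbf{x} : \varphi(\mathbf{x}) \neq \alpha\}$, so $\mu(N) = 0$. Each coordinate permutation is a measure-preserving bijection (Lemma~\ref{lem:initiallemma}a), so every $\pi^{-1}N$ is null and the finite union $\bigcup_{\pi \in \mathfrak{S}_n}\pi^{-1}N$ is null; hence for every $\mathbf{x}$ outside this null set we have $\varphi(\pi\mathbf{x}) = \alpha$ for \emph{all} $\pi \in \mathfrak{S}_n$ simultaneously. Substituting into the generalized permutation-test sum, on the region $h(\mathbf{x}) \neq 0$,
\begin{displaymath}
\frac{1}{n!}\sum_{\pi \in \mathfrak{S}_n} \frac{g(\pi\mathbf{x})}{h(\mathbf{x})}\,\varphi(\pi\mathbf{x}) = \frac{\alpha}{h(\mathbf{x})}\cdot\frac{1}{n!}\sum_{\pi \in \mathfrak{S}_n} g(\pi\mathbf{x}) = \frac{\alpha}{h(\mathbf{x})}\,h(\mathbf{x}) = \alpha,
\end{displaymath}
which is the asserted identity, and it holds for every density $g$ at once.

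The genuinely substantive point is the first step: one must notice that quantifying over all densities collapses $\varphi$ to the constant $\alpha$ almost everywhere, so the ``partial converse'' holds because the constant test is itself a generalized permutation test with respect to every $g$, while the restriction to $\{h \neq 0\}$ is exactly what makes the division by $h$ legitimate. The only technical care lies in the orbit upgrade, where the measure-preservation of permutations (Lemma~\ref{lem:initiallemma}a) is needed to pass from $\varphi = \alpha$ a.e.\ to $\varphi(\pi\,\cdot) = \alpha$ a.e.\ for all $\pi$ at once; I would watch that the exceptional sets along the orbit indeed remain null.
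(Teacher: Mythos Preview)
Your argument is correct, and it takes a different---in fact considerably shorter---route than the paper's. You observe that exactness against \emph{every} density forces $\varphi=\alpha$ a.e.\ directly, by testing against normalized indicators $\mathbf{1}_A/\mu(A)$; the proposition then drops out as a triviality once you propagate the null set along the (finitely many, measure-preserving) permutations. The paper instead argues by contradiction at the level of the permutation-test identity itself: it fixes a density $g$, defines the symmetric sets $\Omega'_\pm$ on which the weighted orbit sum exceeds or falls short of $\alpha n!$, and, if either has positive measure, renormalizes $g$ restricted to that set to manufacture a density violating exactness. So the paper never isolates the fact that $\varphi$ must be constant a.e.; it works entirely with the orbit-averaged quantity and the decomposition into $\pi\Omega_T$.

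What your approach buys is brevity and a sharper conclusion: you actually show the hypothesis pins $\varphi$ down to the constant test, which makes the ``partial converse'' somewhat degenerate as stated. What the paper's approach buys is a template that would survive weakening the hypothesis---if exactness were demanded only over a smaller class of densities closed under restriction to symmetric sets and renormalization (rather than one containing all indicator densities), the paper's contradiction still runs, whereas your first step would no longer be available. In that sense the paper's proof is closer in spirit to Scheff\'e's original converse for the exchangeable case, even though, under the proposition's literal hypothesis, your shortcut is entirely legitimate.
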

\begin{proof}
For any given probability density $g$, let $\Omega^\prime$ be the subset of $\Omega$ where $h\left(\mathbf{x}\right) \neq 0$, {\em i.e.}:
$$ \Omega^\prime = \left\{ \mathbf{x} \in \Omega : h\left(\mathbf{x}\right) \neq 0 \right\}.$$
Consider two symmetric subsets of $\Omega^\prime$:
\begin{align*}
\Omega^\prime_+ &= \left\{ \mathbf{x} \in \Omega^\prime : \displaystyle \sum_{\pi \in \mathfrak{S}_n} \frac{g\left( \pi \mathbf{x} \right)}{h\left(\mathbf{x}\right)} \,\,\, \varphi\left(\pi \mathbf{x}\right) > \alpha n! \right\} \\
\Omega^\prime_- &= \left\{ \mathbf{x} \in \Omega^\prime : \displaystyle \sum_{\pi \in \mathfrak{S}_n} \frac{g\left( \pi \mathbf{x} \right)}{h\left(\mathbf{x}\right)} \,\,\, \varphi\left(\pi \mathbf{x}\right) < \alpha n! \right\}
\end{align*}
It will be shown that if a probability density $g$ exists where $\mu\left(\Omega^\prime_+\right) > 0$ or $\mu\left(\Omega^\prime_-\right) > 0$, then there exist probability densities $g_1$ and $g_2$ contradicting (\ref{eq:exactness}).

If $\mu\left(\Omega^\prime_+\right) > 0$, then
$$ \int_{\Omega^\prime_+} g\,d\mu > 0$$
by lemma \ref{lem:simplelemma}.  Therefore, where $1_{\Omega^\prime_+}$ is the indicator function of the set $\Omega^\prime_+$, $g_1$ defined by:
$$ g_1 = \frac{1}{\int_{\Omega^\prime_+}g\,d\mu} 1_{\Omega^\prime_+} g$$
is a probability density.  Note that $1_{\Omega^\prime_+}$ is exchangeable due to the fact that $\Omega^\prime_+$ is symmetric.  Consider $\int_\Omega g_1 \varphi\,d\mu$:
\begin{align*}
\int_\Omega g_1 \varphi\,d\mu &= \int_T \sum_{\pi \in \mathfrak{S}_n} g_1\left(\pi \mathbf{x}\right)\varphi\left(\pi \mathbf{x}\right)d\,\mu\left(\mathbf{x}\right) \\
&= \int_T \frac{1_{\Omega^\prime_+}\left(\mathbf{x}\right) h\left(\mathbf{x}\right)}{\int_{\Omega^\prime_+}g\,d\mu} \left[ \sum_{\pi \in \mathfrak{S}_n} \frac{g\left(\pi \mathbf{x}\right)}{h\left(\mathbf{x}\right)} \varphi\left(\pi \mathbf{x}\right) \right]d\,\mu\left(\mathbf{x}\right) \\
&> \alpha n! \frac{\int_{\Omega^\prime_+ \cap T} h\,d\mu}{\int_{\Omega^\prime_+}g\,d\mu}. 
\end{align*}
Therefore, by lemma \ref{lem:initiallemma} and lemma \ref{lem:simplelemma}, it follows from the observation that $\Omega^\prime_+$ is symmetric that
$$\int_{\Omega} g_1 \varphi\,d\mu > \alpha$$
contradicting (\ref{eq:exactness}).

An analogous contradiction can be derived using the probability density $g_2$ defined:
$$ g_2 = \frac{1}{\int_{\Omega^\prime_-}g\,d\mu} 1_{\Omega^\prime_-} g$$
in the case where $\mu\left(\Omega^\prime_-\right) > 0$.
\end{proof}

This result is a generalization to arbitrary probability densities of a result of Scheff\'e (\cite{Scheffe}, Theorem 3) concerning permutation tests applied to certain exchangeable probability distributions.

\subsection{Power of Generalized Permutation Test Functions}

The power of the generalized permutation test can be maximized as follows.  Following the decomposition of $\Omega$ with respect to the symmetric group of permutations $\mathfrak{S}_n$, note that for any positive function $g : \Omega \rightarrow \mathbb{R}$ and
$\varphi : \Omega \rightarrow \left[0,1\right]$,
$$ \int_\Omega g \varphi\,d\mu = \int_{\Omega_T} \left( \sum_{\pi \in \mathfrak{S}_n}
 g\left(\pi \mathbf{x}\right) \varphi \left(\pi \mathbf{x}\right) \right)\,d\mu\left(\mathbf{x}\right) .$$
Therefore for $\varphi_1$ and $\varphi_2$ generalized permutation test functions:
$$ \sum_{\pi \in \mathfrak{S}_n} g\left(\pi \mathbf{x}\right)\varphi_1\left(\pi \mathbf{x}\right) \geq 
\sum_{\pi \in \mathfrak{S}_n} g\left(\pi \mathbf{x}\right)\varphi_2\left(\pi \mathbf{x}\right) \text{ a.e.} $$
implies
$$ \int_\Omega g \varphi_1\,d\mu \geq \int_\Omega g \varphi_2\,d\mu .$$
Consequently given null and alternative hypotheses represented by positive functions $g_0$ and $g_1$ respectively, the most powerful generalized permutation test, $\varphi$, of significance $\alpha$ can be attained by selecting values for $\varphi\left(x\right)$, $x \in \Omega$, such that:
$$\sum_{\pi \in \mathfrak{S}_n} g_1\left(\pi \mathbf{x}\right)\varphi\left(\pi \mathbf{x}\right)$$ 
takes the maximum value subject to:
$$\sum_{\pi \in \mathfrak{S}_n} \frac{g_0\left(\pi \mathbf{x}\right)}{n!h\left(\mathbf{x}\right)}
\varphi\left(\pi \mathbf{x}\right) \leq \alpha$$
for $h\left(\mathbf{x}\right) =  \frac{1}{n!} \sum_{\pi \in \mathfrak{S}_n} g_0\left(\pi \mathbf{x}\right)$.  This optimization is exactly the {\em continuous knapsack problem} suggested by (\cite{Dantzig}, \S 26-1, p. 517; \cite{Martello}, \S 2.2, p. 16).  In the continuous knapsack problem, for given two vectors of dimension $n$, $\mathbf{v} \in \mathbb{R}^n$ and $\mathbf{w} \in \mathbb{R}^n$, the goal is to maximize $\mathbf{v}\T\mathbf{x}$ subject to $\mathbf{w}\T\mathbf{x} \leq \alpha$ for $\mathbf{x} \in \left[0,1\right]^n$.  A maximally powerful generalized permutation test is obtained by solving the continuous knapsack problem for set of permutation images of $\mathbf{x}$ for all $\mathbf{x} \in \Omega$.

Explicitly, given $g_0$ and $g_1$, define $l\left(\mathbf{x}\right)$ as follows:
$$ l\left(\mathbf{x}\right) = \left\{
                     \begin{array}{ll} \frac{g_1\left(\mathbf{x}\right)}{g_0\left(\mathbf{x}\right)}
                                                &\text{if } g_0\left(\mathbf{x}\right)\neq 0 \\
                                       \infty  &\text{if } g_0\left(\mathbf{x}\right)= 0 \text{ and }
                                                      g_1\left(\mathbf{x}\right)\neq 0 \\
                                              0 &\text{if } g_0\left(\mathbf{x}\right)= 0 \text{ and }
                                                      g_1\left(\mathbf{x}\right)= 0
             \end{array}
        \right.$$
for any $\mathbf{x} \in \Omega.$  For any given ordering of $\mathfrak{S}_n = \{\pi_1, \pi_2, \ldots, \pi_{n!}\}$, let  $l_i\left(\mathbf{x}\right) = l\left(\pi_i \mathbf{x}\right)$.  In particular, for a given fixed $\mathbf{x} \in \Omega$, let $\{l_1\left(\mathbf{x}\right), l_2\left(\mathbf{x}\right), \ldots, l_D\left(\mathbf{x}\right)\}$ be the distinct values of $l\left(\pi_i \mathbf{x}\right)$ for all $\pi_i \in \mathfrak{S}_n$.  Furthermore assume that $\mathfrak{S}_n$ is ordered such that:
\begin{equation}\label{eq:lr-chain}
l_1\left(\mathbf{x}\right) > l_2\left(\mathbf{x}\right) > l\left(\mathbf{x}\right) = l_r\left(\mathbf{x}\right) >
 \ldots > l_D\left(\mathbf{x}\right).
\end{equation}
Let $C_i = \{ \pi \in \mathfrak{S}_n : l_i\left(\mathbf{x}\right)\}$ be the classes of the partition of $\mathfrak{S}_n$ with respect to the ratio of $g_1$ to $g_0$ over all permutations of $\mathbf{x} \in \Omega$.  Define:
\begin{equation}\label{eq:weight}
w_i = \frac{\sum_{\pi \in C_i} g_0\left(\pi \mathbf{x}\right)}{n! h\left(\mathbf{x}\right)}
\end{equation}
where $h\left(\mathbf{x}\right) =  \frac{1}{n!} \sum_{\pi \in \mathfrak{S}_n} g_0\left(\pi \mathbf{x}\right)$.  Note that unlike the likelihood ratios which are decreasing by design, the sequence of $w_i$ is not necessarily monotonic.

Following the solution to the continuous knapsack problem, note that there exists a unique integer $d$ with $1 \leq d \leq D$ such that:
$$ \sum_{i=1}^{d-1} w_i < \alpha \leq \sum_{i=1}^d w_i . $$
Therefore the most powerful generalized permutation test is defined on the permutation images of $x \in \Omega$ by:
\begin{equation} \label{eq:alt-mp}
\varphi \left(\pi \mathbf{x}\right) = \left\{
             \begin{array}{ll} 1 &\text{if } \pi \in C_i \text{ for } i < d \\
                               \vartheta  &\text{if } \pi \in C_d \\
                               0 &\text{if } \pi \in C_i \text{ for } i > d
             \end{array}
        \right.
\end{equation}
where $\vartheta = \left(\alpha - \sum_{i=1}^{d-1} w_i \right)/w_d$.  \cite{LeeBraun} provide empirical support for tests of this form.

Recalling that $l\left(\mathbf{x}\right) = l_r\left(\mathbf{x}\right)$ by assumption, (\ref{eq:alt-mp}) can be written in a more convenient form.  Denote by $\Omega_r$ the subset of all $\mathbf{x} \in \Omega$ that rank $r$-th in the set of all permutation images with respect to the ratio of $g_1$ to $g_0$ {\em i.e.} satisfying (\ref{eq:lr-chain}). Then:

\begin{proposition}\label{prop:test-prop}
For $\mathbf{x} \in \Omega_r$, the most powerful generalized permutation test is given by:
\begin{displaymath}
\varphi \left(\mathbf{x}\right) = \left\{
             \begin{array}{ll} 1 &\text{if } \sum_{i=1}^r w_i < \alpha \\
                               \vartheta  &\text{if } \sum_{i=1}^{r-1} w_i < \alpha \leq \sum_{i=1}^r w_i \\
                               0 &\text{if } \sum_{i=1}^{r-1} w_i \geq \alpha
             \end{array}
        \right.
\end{displaymath}
where $\vartheta = \left(\alpha - \sum_{i=1}^{r-1} w_i \right)/w_r$.
\end{proposition}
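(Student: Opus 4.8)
The plan is to derive Proposition~\ref{prop:test-prop} directly from the solution form~(\ref{eq:alt-mp}) by a careful reindexing, the essential observation being that for $\mathbf{x} \in \Omega_r$ the point $\mathbf{x}$ itself occupies the $r$-th rank among its permutation images. First I would note that, since $l(\mathbf{x}) = l_r(\mathbf{x})$ by the ordering~(\ref{eq:lr-chain}), the identity permutation carries $\mathbf{x}$ into the class $C_r$; that is, evaluating~(\ref{eq:alt-mp}) at the identity gives $\varphi(\mathbf{x}) = 1$ if $r < d$, $\varphi(\mathbf{x}) = \vartheta$ if $r = d$, and $\varphi(\mathbf{x}) = 0$ if $r > d$, where $d$ is the unique index with $\sum_{i=1}^{d-1} w_i < \alpha \leq \sum_{i=1}^{d} w_i$. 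It therefore suffices to re-express the three cases $r<d$, $r=d$, $r>d$ in terms of the partial sums appearing in the statement.

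Writing $S_k = \sum_{i=1}^{k} w_i$ with $S_0 = 0$, I would next record that each $w_i > 0$: by~(\ref{eq:weight}) the numerator $\sum_{\pi \in C_i} g_0(\pi\mathbf{x})$ is a sum of values of the positive function $g_0$ over the nonempty class $C_i$, and the denominator $n!\,h(\mathbf{x})$ is positive since $h$ is assumed nonzero. Hence the partial sums are strictly increasing, $S_0 < S_1 < \cdots < S_D$. Combined with the defining inequalities $S_{d-1} < \alpha \leq S_d$ for $d$, strict monotonicity yields the equivalences $r < d \iff S_r < \alpha$, $r > d \iff S_{r-1} \geq \alpha$, and $r = d \iff S_{r-1} < \alpha \leq S_r$. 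Substituting these equivalences into the three cases above reproduces exactly the piecewise definition in the proposition, and in the boundary case $r = d$ the weight $\vartheta = (\alpha - S_{d-1})/w_d$ becomes $(\alpha - S_{r-1})/w_r$, matching the stated formula.

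I do not anticipate a genuine obstacle here, since the argument is a reindexing rather than a new estimate; the one point demanding care is the strict monotonicity of the partial sums, which is what makes the comparisons between $r$ and $d$ equivalent to the partial-sum conditions and, in particular, guarantees that the boundary case $r = d$ is well-defined with $w_r > 0$ so that $\vartheta$ is finite. Establishing $w_i > 0$ from the positivity of $g_0$ and the nonvanishing of $h$ is thus the small but load-bearing step of the proof.
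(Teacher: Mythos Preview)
Your proposal is correct and follows exactly the route the paper indicates: the paper derives (\ref{eq:alt-mp}) from the continuous knapsack solution and then simply remarks that, since $l(\mathbf{x}) = l_r(\mathbf{x})$, the formula ``can be written in a more convenient form,'' stating the proposition without further argument. Your reindexing via the partial sums $S_k$ and the comparison of $r$ with $d$ is precisely the omitted detail, and matches the paper's intent.

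One small remark: you identify strict positivity of every $w_i$ as the load-bearing step, but in fact only non-decreasing partial sums are needed for the equivalences $r<d \Leftrightarrow S_r<\alpha$ and $r>d \Leftrightarrow S_{r-1}\geq\alpha$ (via contrapositive of monotonicity), and the single requirement $w_r>0$ in the boundary case follows automatically from $S_{r-1}<\alpha\leq S_r$. This matters because the paper's definition of $l(\mathbf{x})$ explicitly allows $g_0(\pi\mathbf{x})=0$, so some $w_i$ could in principle vanish; your argument still goes through, just with slightly weaker hypotheses than you claim.
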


\subsection{Most Powerful Generalized Permutation Tests and Most Powerful Tests}

In the case where $g_0$ is exchangeable, the most powerful generalized permutation test given by Proposition \ref{prop:test-prop}, specializes to the permutation test defined by \cite{LehmannStein1}.  It has been shown by \cite{LehmannStein1} that the permutation test is most powerful as a test of a non-exchangeable alternative hypothesis $g_1$ against a specific exchangeable null hypothesis constructed from $g_1$.  Lehmann and Stein observe that for a non-exchangeable hypothesis $g_1$, the permutation test reduces to a likelihood ratio test of the hypothesis represented by $g_1$ against the composite hypothesis defined by a particular one-parameter family of densities.  Following from a generalization of the Neyman-Pearson lemma proven in an earlier work of \cite{LehmannStein2}, a test is most powerful as a test of a hypothesis corresponding to a density $g_1$ against a hypothesis corresponding to a one-parameter family of densities if it is most powerful as a test of the hypothesis given by density $g_1$ against the hypothesis given by the density of the average over the parameter.  In the case of the permutation test, the average over the parameter in question, $\alpha$, corresponds to $h_{g_1}$.  Therefore, \cite{LehmannStein1} show that the permutation test is most powerful at significance level $\alpha$ as a test of the non-exchangeable alternative hypothesis $g_1$ against the exchangeable null hypothesis constructed from $h_{g_1}$.

Outside of the context considered in Lehmann and Stein, however, tests defined by Proposition \ref{prop:test-prop} are not necessarily most powerful against any null hypothesis, even where the null hypothesis is exchangeable.  Indeed, consider the following example.  Let $f\left(x,y\right)$ denote the bivariate normal distribution with mean at the origin and identity covariance.  Let the null hypothesis $g_0\left(x,y\right)$ be equal to $f\left(x,y\right)$.  Let the alternative hypothesis $g_1\left(x,y\right)$ be defined to be $f\left(x,y-\delta\right)$ for some $\delta > 0$.  Clearly the null hypothesis is exchangeable, and, for $\delta > 0$, the alternative hypothesis is not.  Following the Neyman-Pearson lemma \cite{Lehmann}, Theorem 3.2.1, the most powerful test at significance level $\frac{1}{2}$ rejects all points with $y>0$.  Any permutation test at significance level $\frac{1}{2}$, however, must satisfy Definition \ref{def:perm-test}.  Therefore the sum of the test function over all the permutation images must be one.  In this simple case, the permutation images correspond to pairs of points reflected through the $x=y$ line.  Therefore, the most powerful test, where the sum of the test function over the permutation images is either zero or two, cannot be achieved with a permutation test. 

A criterion for when the most powerful generalized permutation test is, in fact, most powerful is derived from the Neyman-Pearson lemma.  Let $D\left(\mathbf{x}\right)$ denote the number of distinct likelihood ratio $g_1\left(\pi \mathbf{x}\right) / g_0\left(\pi \mathbf{x}\right)$ values in set of permutation images of any given $\mathbf{x} \in \Omega$. Within each permutation image set there exists a $d\left(\mathbf{x}\right)$ with  $1 \leq d\left(\mathbf{x}\right) \leq D\left(\mathbf{x}\right)$ such that:
$$ \sum_{i=1}^{d\left(\mathbf{x}\right)-1} w_i < \alpha \leq \sum_{i=1}^{d\left(\mathbf{x}\right)} w_i $$
where the ordering of $\mathfrak{S}_n$ follows (\ref{eq:lr-chain}) and $w_i$ is defined following (\ref{eq:weight}) for any $\mathbf{x} \in \Omega$.  Let $l_d\left(\mathbf{x}\right)$ be the value of the likelihood ratio associated to the $d\left(\mathbf{x}\right)$-th permutation.  Then:
\begin{proposition}
The most powerful generalized permutation test is most powerful if and only if $l_d\left(\mathbf{x}\right)$ is constant
for almost all $\mathbf{x} \in \Omega$.
\end{proposition}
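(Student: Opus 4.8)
The plan is to compare the generalized permutation test $\varphi$ furnished by Proposition \ref{prop:test-prop} with the globally most powerful test supplied by the Neyman--Pearson lemma, and to extract the equality condition. Write $l(\mathbf{y}) = g_1(\mathbf{y})/g_0(\mathbf{y})$ (with the conventions adopted for $g_0 = 0$), and let $\varphi_{\mathrm{NP}}$ be a Neyman--Pearson test of size $\alpha$: there is a threshold $k$ and a boundary randomization with $\varphi_{\mathrm{NP}} = 1$ where $l > k$, $\varphi_{\mathrm{NP}} = 0$ where $l < k$, and $\int_\Omega g_0\varphi_{\mathrm{NP}}\,d\mu = \alpha$ (\cite{Lehmann}, Theorem 3.2.1). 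Since $\varphi$ is a generalized permutation test it is exact by the corollary establishing exactness, so $\int_\Omega g_0\varphi\,d\mu = \alpha$ as well, and the two size terms cancel in the identity
\[
\int_\Omega g_1\varphi_{\mathrm{NP}}\,d\mu - \int_\Omega g_1\varphi\,d\mu = \int_\Omega \bigl(\varphi_{\mathrm{NP}} - \varphi\bigr)\bigl(g_1 - k\,g_0\bigr)\,d\mu .
\]
The integrand is pointwise nonnegative, so $\varphi$ is most powerful exactly when the right-hand side vanishes, i.e. when $\varphi = 1$ a.e. on $\{\,l > k\,\}$ and $\varphi = 0$ a.e. on $\{\,l < k\,\}$. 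The structural observation driving everything is that $l_d(\mathbf{x})$ depends only on the $\mathfrak{S}_n$-orbit of $\mathbf{x}$, since the ordering \eqref{eq:lr-chain} and the weights \eqref{eq:weight} are orbit-invariant; thus $l_d$ is an exchangeable function, and ``$l_d$ constant a.e.'' means $l_d = k$ a.e.\ for this same $k$.

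For the sufficiency direction I would assume $l_d(\mathbf{x}) = k$ a.e.\ and unwind \eqref{eq:alt-mp}: on each orbit $\varphi$ equals $1$ on the classes whose likelihood ratio exceeds $l_d = k$, equals $0$ on those whose ratio is below $k$, and equals $\vartheta$ on the single class where $l = k$. Hence $\varphi = 1$ on $\{\,l > k\,\}$ and $\varphi = 0$ on $\{\,l < k\,\}$ identically, so $\varphi$ has precisely Neyman--Pearson form with cutoff $k$; its orbitwise randomization $\vartheta$ lives on the boundary $\{\,l = k\,\}$, where the integrand above is identically zero and hence irrelevant to optimality. As $\varphi$ has size $\alpha$ by exactness, the sufficiency half of the Neyman--Pearson lemma shows it is most powerful.

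For necessity I would assume $\varphi$ is most powerful and read off from the vanishing integrand that $\varphi = 1$ a.e.\ on $\{\,l > k\,\}$ and $\varphi = 0$ a.e.\ on $\{\,l < k\,\}$. Under the generic condition that for almost every orbit the knapsack cutoff is strictly randomized, i.e.\ $0 < \vartheta < 1$, this pins the threshold: the cutoff class $C_d$ carries the intermediate value $\vartheta$, so it can lie neither in $\{\,l > k\,\}$ (where most powerfulness forces $\varphi = 1$) nor in $\{\,l < k\,\}$ (where it forces $\varphi = 0$), and therefore $l = l_d(\mathbf{x}) = k$ on $C_d$ for a.e.\ orbit. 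This yields $l_d = k$ a.e., as claimed. Throughout I would restrict to the region where $h \neq 0$, on which $\varphi$ is defined, using Lemma \ref{lem:simplelemma} to transfer statements about $h$-measure to $g_0$-measure and Lemma \ref{lem:initiallemma} to reduce integrals over $\Omega$ to integrals over $\Omega_T$.

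The delicate point, and the step I expect to demand the most care, is precisely this last identification in the necessity direction. Because each orbit contributes only finitely many likelihood-ratio values, the a.e.\ agreement with $\varphi_{\mathrm{NP}}$ off $\{\,l = k\,\}$ does not by itself force $l_d = k$: one must exclude degenerate orbits in which the cutoff is \emph{non}-randomized, where $\vartheta = 1$ and $\alpha$ coincides exactly with a partial sum $\sum_{i \le d} w_i$. In such cases $C_d$ can sit strictly above $k$ with $l_d(\mathbf{x}) > k$ while the test still agrees with $\varphi_{\mathrm{NP}}$ and remains most powerful, so the equivalence genuinely relies on ruling these out on a set of positive measure --- for instance by assuming the distribution of $l$ under $g_0$ is non-atomic, which renders the strictly-randomized regime $0 < \vartheta < 1$ generic and sends the boundary set $\{\,l = k\,\}$, the kernel of the sign-definite integrand, to measure zero. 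Making this genericity hypothesis explicit, and verifying that the per-orbit exactness constraint $\sum_\pi g_0(\pi\mathbf{x})\,\varphi(\pi\mathbf{x})/\bigl(n!\,h(\mathbf{x})\bigr) = \alpha$ then forces the cutoff to coincide with the global threshold, is the crux of the argument.
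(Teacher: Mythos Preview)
The paper states this proposition without proof: there is no \texttt{proof} environment following it, and the next line opens Section~3. So there is no argument in the paper to compare yours against; your proposal is the only proof on the table.

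Your approach is the natural one and the sufficiency direction is clean: once $l_d(\mathbf{x})=k$ a.e., the orbitwise knapsack solution \eqref{eq:alt-mp} is literally a Neyman--Pearson test with cutoff $k$, exactness supplies the size, and optimality follows. Your observation that $l_d$ is exchangeable (orbit-invariant) is the right structural fact linking the per-orbit construction to a single global threshold.

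On the necessity direction you have correctly located the only real difficulty, and you are honest about it. If on a positive-measure set of orbits the partial sum $\sum_{i\le d} w_i$ hits $\alpha$ exactly, then $\vartheta=1$, the cutoff class $C_d$ is absorbed into the rejection region, and the test can match Neyman--Pearson with $l_{d+1}\le k<l_d$ while $l_d(\mathbf{x})\ne k$. Nothing in the hypotheses of the proposition excludes this, so as literally stated the ``only if'' direction does require an auxiliary genericity condition of the kind you describe (e.g.\ that for a.e.\ orbit $\alpha$ is not a partial sum of the $w_i$, or that the $g_0$-law of $l$ is non-atomic). Since the paper gives no proof and no such hypothesis, your proposal is as complete as the statement allows: you have the right argument, and you have put your finger on exactly the assumption the proposition is tacitly using.
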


\section{Application of Generalized Permutation Tests to Linear Models}

The results of the previous section provide the initial elements of a somewhat more general theory of permutation testing; however, it is unrealistic to assume that the densities of $\mathbf{x}$ under null and alternative hypotheses are fully specified.  In fact, if they were, then from the results of the previous section, a more direct course of action would be through the application of the Neyman-Pearson theory.  However, the fundamental observation that only the order of the likelihood ratios obtained from the permutations of the observed sample need be specified as opposed to their exact values provides the basis for the next section.  As such, the principle difficulty in applying the generalized permutation test is determining the order by which the likelihood ratios of the permutations of the observed samples should fall.

Consider the linear model:
\begin{equation}\label{eq:lin-mod}
  \mathbf{y} = \mathbf{X}\boldsymbol{\beta} + \boldsymbol{\varepsilon}
\end{equation}
where $\mathbf{y} \in \mathbb{R}^n$, $\mathbf{X} \in \mathbb{R}^{n\times 1}$, $\boldsymbol{\beta} \in \mathbb{R}^p$, and $\mathbf{e} \in \mathbb{R}^{n}$.  The vector of observations $\mathbf{y}$ and the vector $\mathbf{X}$ are given, the value $\beta$ is unknown and the random vector $e$ is assumed to be taken from a multivariate normal distribution with mean at the origin and covariance given by a symmetric positive definite matrix.  Initially we will assume that this symmetric positive definite matrix is known under both the null and alternative hypothesis.

Let the null hypothesis, $H_0$, be that $\boldsymbol{\beta} = 0$.  Note that in this case, the density of $\mathbf{y}$ is not necessarily exchangeable; however, with known $\boldsymbol{\Sigma}_0$, the relative weights $w_i$ can be calculated.  A particular form of the alternative hypothesis that $\boldsymbol{\beta} \neq 0$ will be considered where $\mathbf{y}$ is multivariate normally distributed with mean $\mathbf{X}\boldsymbol{\beta}$ and known covariance $\boldsymbol{\Sigma}_1$.  Consider an alternative hypothesis where the unit vector $\mathbf{u} = \boldsymbol{\beta} / ||\boldsymbol{\beta}||$ is specified, but the magnitude $||\boldsymbol{\beta}||$ is unknown.  Note that this alternative hypothesis is a modest generalization of either the $p=1$ case where $\beta \in \mathbb{R}$ and $\beta > 0$ or, in the case of $p>1$, the alternative hypothesis that for some unique $j$, $\beta_j > 0$ and $\beta_i = 0$ for all $i\neq j$.

The likelihood ratio at any given $\mathbf{y}$ is therefore:
\begin{equation*}
  l\left(\mathbf{y}\right)  = \frac{f_1\left(\mathbf{y}\right)}{f_0\left(\mathbf{y}\right)}
  = \sqrt{\frac{\left|\boldsymbol{\Sigma}_0\right|}{\left|\boldsymbol{\Sigma}_1\right|}}
    \exp{\left(-\frac{1}{2}\left(\mathbf{y}-\mathbf{X}\boldsymbol{\beta}\right)^T\boldsymbol{\Sigma}_1^{-1}
    \left(\mathbf{y}-\mathbf{X}\boldsymbol{\beta}\right) + \mathbf{y}^T\boldsymbol{\Sigma}_0^{-1}\mathbf{y}\right)}
\end{equation*}
Or equivalently:
\begin{equation*}
l\left(\mathbf{y}\right) = S\left(\mathbf{X},\boldsymbol{\beta},\boldsymbol{\Sigma}_0,\boldsymbol{\Sigma}_1\right)
      \exp{\left(\boldsymbol{\beta}^T \mathbf{X}^T \boldsymbol{\Sigma}_1^{-1}\mathbf{y}\right)}
      \exp{\left(-\frac{1}{2}\mathbf{y}^T\left(\boldsymbol{\Sigma}_1^{-1}-\boldsymbol{\Sigma}_0^{-1}\right)\mathbf{y}\right)}
\end{equation*}
where $S\left(\mathbf{X},\boldsymbol{\beta},\boldsymbol{\Sigma}_0,\boldsymbol{\Sigma}_1\right)$ is given by:
\begin{equation*}
  S\left(\mathbf{X},\boldsymbol{\beta},\boldsymbol{\Sigma}_0,\boldsymbol{\Sigma}_1\right) =
  \sqrt{\frac{\left|\boldsymbol{\Sigma}_0\right|}{\left|\boldsymbol{\Sigma}_1\right|}}
  \exp{\left(-\frac{1}{2}\boldsymbol{\beta}^T\mathbf{X}^T\boldsymbol{\Sigma}_1^{-1}\mathbf{X}\boldsymbol{\beta}\right)}.
\end{equation*}
Introducing the generalized least squares estimate for $\boldsymbol{\beta}$, $\hat{\boldsymbol{\beta}}\left(\mathbf{y},\mathbf{X},\boldsymbol{\Sigma}_1\right)$:
\begin{equation*}
  \hat{\boldsymbol{\beta}}\left(\mathbf{y},\mathbf{X},\boldsymbol{\Sigma}_1\right)=\left(\mathbf{X}^T\boldsymbol{\Sigma}_1^{-1}\mathbf{X}\right)^{-1}\mathbf{X}^T\boldsymbol{\Sigma}_1^{-1}\mathbf{y}
\end{equation*}
results in:
\begin{equation*}
  l\left(\mathbf{y}\right) = S\left(\mathbf{X},\boldsymbol{\beta},\boldsymbol{\Sigma}_0,\boldsymbol{\Sigma}_1\right)
  \exp{\left(\left|\left|\boldsymbol{\beta}\right|\right| V_1\left(\mathbf{y},\mathbf{X},\boldsymbol{\Sigma}_0,\boldsymbol{\Sigma}_1,\mathbf{u}\right)\right)}
  V_2\left(\mathbf{y},\boldsymbol{\Sigma}_0,\boldsymbol{\Sigma_1}\right)
\end{equation*}
where:
\begin{align*}
  V_1\left(\mathbf{y},\mathbf{X},\boldsymbol{\Sigma}_0,\boldsymbol{\Sigma}_1,\mathbf{u}\right)&=\mathbf{u}^T \left(\mathbf{X}^T \boldsymbol{\Sigma}_1^{-1}\mathbf{X}\right)\hat{\boldsymbol{\beta}}\left(\mathbf{y},\mathbf{X},\boldsymbol{\Sigma}_1\right) \quad \text{and}\\
V_2\left(\mathbf{y},\boldsymbol{\Sigma}_0,\boldsymbol{\Sigma_1}\right)&=\exp{\left(-\frac{1}{2}\mathbf{y}^T\left[\boldsymbol{\Sigma}_1^{-1}-\boldsymbol{\Sigma}_0^{-1}\right]\mathbf{y}\right)}.
\end{align*}
Note that the function $S\left(\mathbf{X},\boldsymbol{\beta},\boldsymbol{\Sigma}_0,\boldsymbol{\Sigma}_1\right)$ is not a function of $\mathbf{y}$ and is therefore invariant of the action of $\pi \in \mathfrak{S}_n$.  Furthermore,
both $V_1\left(\mathbf{y},\mathbf{X},\boldsymbol{\Sigma}_0,\boldsymbol{\Sigma}_1,\mathbf{u}\right)$ and $V_2\left(\mathbf{y},\boldsymbol{\Sigma}_0,\boldsymbol{\Sigma_1}\right)$ can be calculated from known or assumed values.

Two possible tests can be considered.  If the covariance matrices under the null and alternative hypotheses can be assumed to differ only slightly, or in fact be identical, then the order of the likelihood ratios over the permutations of the observed sample will be determined by the order of the values of $V_1\left(\pi\mathbf{y},\mathbf{X},\boldsymbol{\Sigma}_0,\boldsymbol{\Sigma}_1,\mathbf{u}\right)$.  Alternatively, if the difference in covariance matrices can be assumed to dominate the likelihood ratios over the permutations of the observed sample, then the order of the values $V_2\left(\pi\mathbf{y},\boldsymbol{\Sigma}_0,\boldsymbol{\Sigma_1}\right)$ can be assumed to determined the order of the likelihood ratios.

Generalizing to the case where the covariance matrix $\boldsymbol{\Sigma}_1$ is completely unspecified is limited by the observation that permutations can not used to distinguish between likelihoods of different model specifications.  That is to say, if $\mathbf{y}$ is distributed following a multivariate normal distribution centered at $\mathbf{X}\boldsymbol{\beta}$ with covariance $\boldsymbol{\Sigma}$, them $\pi \mathbf{y}$ will be distributed following a multivariate normal distribution centered at $\boldsymbol{\Pi}\mathbf{X}\boldsymbol{\beta}$ with covariance $\boldsymbol{\Pi}^T\boldsymbol{\Sigma}\boldsymbol{\Pi}$ where $\boldsymbol{\Pi}$ is the usual matrix representation of the permutation $\pi$.

In the case where a reasonable (non-symmetric) priors for $\boldsymbol{\beta}$ and $\boldsymbol{\Sigma}$ are available, then the marginal distributions of $\mathbf{y}$, (\cite{OHagan}, pages 188, 244):
\begin{equation*}
m_i\left(\mathbf{y}\right) = \int g_i\left(\mathbf{y}|\boldsymbol{\beta}_i,\boldsymbol{\Sigma}_i\right)g_i\left(\boldsymbol{\beta}_i,\boldsymbol{\Sigma}_i\right)\,d\boldsymbol{\beta}_i\,d\boldsymbol{\Sigma}_i
\end{equation*}
where $i=0,1$ designates likelihood under null and alternative hypotheses respectively, can be calculated for each permutation of the observed sample.  The chain of decreasing likelihood ratios (\ref{eq:lr-chain}) and weights (\ref{eq:weight}) can be calculated and a permutation test can be constructed according to Proposition \ref{prop:test-prop}.  Note that in this case, the chain of likelihood ratios (\ref{eq:lr-chain}) corresponds to the values of the Bayes factors.

In the case where no suitable priors are available, a reasonable approach is to restrict the capacity of the linear model (\ref{eq:lin-mod}).  Consider the case where $\boldsymbol{\Sigma}$ takes the form $\sigma^2\left(\mathbf{A}+\lambda^2\mathbf{I}\right)$ for some know positive symmetric definite matrix $\mathbf{A}$ (\emph{c.f.} \cite{VisscherGoddard}).  If $\sigma_0^2$ and $\lambda_0^2$, the values of $\sigma^2$ and $\lambda^2$ under the null hypothesis are assumed, then provided that some process exists to approximate $\sigma_1^2$ and $\lambda_1^2$, the values of $\sigma^2$ and $\lambda^2$ under the alternative hypothesis, sufficiently closely that the ordering of the likelihood ratios over the permutations of the observed sample (\ref{eq:lr-chain}) is preserved, then this approximation to $\sigma_1^2$ and $\lambda_1^2$ can be used in place of the unknown values of $\sigma_1^2$ and $\lambda_1^2$ in the construction of the generalized permutation test.

In the restricted context, for certain hypotheses, the generalized permutation test can be constructed directly.  Consider the case where $\mathbf{X}=0$ and $\boldsymbol{\Sigma}$ takes the form $\sigma^2\left(\mathbf{A}+\lambda^2\mathbf{I}\right)$ as above.  Assume that $\lambda^2$ is known and identical under null and alternative hypotheses.  If $\sigma_0^2$ and $\sigma_1^2$ denote the value of $\sigma^2$ under null and alternative hypotheses respectively, then the likelihood ratio takes the form:
\begin{equation*}
  l\left(\pi \mathbf{y}\right) = \left(\frac{\sigma_0}{\sigma_1}\right)^n \exp{\left(\frac{\sigma_1^2-\sigma_0^2}{2 \sigma_0^2 \sigma_1^2} \left(\pi \mathbf{y}\right)^T \left(\mathbf{A}+\lambda^2 \mathbf{I}\right)^{-1} \left(\pi \mathbf{y}\right)\right)}.
\end{equation*}
Therefore, if the null hypothesis is the $\sigma^2$ takes some nominal value $\sigma_0^2$ and the alternative hypothesis is that $\sigma^2 > \sigma_0^2$, then the weights (\ref{eq:weight}) can be calculated from $\sigma_0^2$ and the ordering of the likelihood ratios (\ref{eq:lr-chain}) under the permutations of the observed samples can be determined from the ordering of $\left(\pi \mathbf{y}\right)^T \left(\mathbf{A}+\lambda^2 \mathbf{I}\right)^{-1} \left(\pi \mathbf{y}\right)$.  Conversely, the alternative hypothesis that $\sigma^2 < \sigma_0^2$ can be tested using the generalized permutation test with the reverse ordering. 

\section{Approximate Generalized Permutation Tests}

In the application of the generalized permutation test, the need to calculate $n!$ likelihood ratios poses a substantial practical obstacle. This section focuses on the mathematical structure of approximate methods that sample with replacement from the complete group of permutations $\mathfrak{S}_n$.  The theory is developed with respect to samples derived from an arbitrary discrete probability distribution on the group of permutations.  It is shown, in the case where each permutation, is equally likely to be selected for the sample, that the approximation converges to the most powerful generalized permutation test of the previous section.  Since the group of permutations is finite, albeit large, asymptotic results are of primarily of theoretical interest.  Nonetheless the bounds on the $p$-values estimated with respect to finite samples in the subsequent section are based on the mathematical structure developed in this section. 

A direct Monte Carlo approximation approach to the generalized permutation test is obtained by considering a sample with replacement from $\mathfrak{S}_n$ instead of the entire symmetric group.  Let $S$ be sample with replacement from $\mathfrak{S}_n$ of size $s$ where the probability of selecting a particular permutation $\pi$ is given by $p_{\pi}$.  Analogous to (\ref{eq:lr-chain}), for a given fixed $x \in \Omega$, let there be ${\hat D}$ distinct values of $l_i\left(\mathbf{x}\right) = g_1\left(\pi_i \mathbf{x}\right) / g_0\left(\pi_i \mathbf{x}\right)$ for all $\pi_i \in S$. Again assume that the permutations in $S$ are ordered such that:
\begin{equation}\label{eq:a-lr-chain}
l_1\left(\mathbf{x}\right) > l_2\left(\mathbf{x}\right) > l\left(\mathbf{x}\right) = l_{\hat r}\left(\mathbf{x}\right)
 > \ldots > l_{\hat D}\left(\mathbf{x}\right).
\end{equation}
Let ${\hat C}_i = \{ \pi \in S : g_1\left(\pi_i \mathbf{x}\right) / g_0\left(\pi_i \mathbf{x}\right) = l_i\left(\mathbf{x}\right)\}$ be the classes of the partition of $S$ with respect to the ratio of $g_1$ to $g_0$.

For fixed $S$, a suitable approximation to $\varphi$ is obtained by redefining the null hypothesis weights (\ref{eq:weight}) with respect to $S$.  Let the integer vector $\mathbf{k}=\left(k_1,k_2,\ldots,k_{\hat D}\right)$ represent the number of permutations, $k_i$, from ${\hat C}_i$ in $S$.  Note that no value $k_i$ is equal to zero. Define:
$$ {\hat w}_i = \frac{k_i\sum_{\pi \in {\hat C}_i} g_0\left(\pi \mathbf{x}\right)}{\sum_{j=1}^{\hat D} k_j
 \sum_{\pi \in {\hat C}_j} g_0\left(\pi \mathbf{x}\right)}. $$
Then ${\hat \varphi}_S$ is defined:
\begin{equation}\label{eq:a-eg}
{\hat \varphi}_S \left(\mathbf{x}\right) = \left\{
             \begin{array}{ll} 1 &\text{if } \sum_{i=1}^{\hat r} {\hat w}_i < \alpha \\
                               {\hat \vartheta}  &\text{if } \sum_{i=1}^{{\hat r}-1} {\hat w}_i <
                               \alpha \leq \sum_{i=1}^{\hat r} {\hat w}_i \\
                               0 &\text{if } \sum_{i=1}^{{\hat r}-1} {\hat w}_i \geq \alpha
             \end{array}
        \right.
\end{equation}
where ${\hat \vartheta} = \left(\alpha - \sum_{i=1}^{{\hat r}-1} {\hat w}_i \right)/{\hat w}_{\hat r}$ for each $\mathbf{x}$ satisfying (\ref{eq:a-lr-chain}).

Note that ${\hat \varphi}_S$ is defined for a specific sample with replacement $S$ and therefore does not take into account the stochastic aspect of sampling from the set of possible permutations.  Following the approach of \cite{Dwass} in the exchangeable case, define the randomized test function ${\hat \varphi}$ to be the probability that a sample with replacement $S$ is selected such that $x$ is rejected by the test function, ${\hat \varphi_S}$, defined according to (\ref{eq:a-eg}).  Therefore the test function ${\hat \varphi}$ corresponds to the Monte Carlo process of selecting a sample of permutation from the set of all possible permutations, constructing an appropriate ${\hat \varphi}_S$ from those permutations, and rejecting based on the result of that test. 

Let the vector $\mathbf{p}$ represent that probability $p_i$ of selecting a permutation in $C_i$ for each $C_i = \{ \pi \in \mathfrak{S}_n : g_1\left(\pi_i \mathbf{x}\right) / g_0\left(\pi_i \mathbf{x}\right) = l_i\left(\mathbf{x}\right)\} \subseteq \mathfrak{S}_n$.  For a sample $S$ with replacement from $\mathfrak{S}_n$ of size $s$, let the integer vector $\mathbf{k}=\left(k_1,k_2,\ldots,k_D\right)$ represent the number of permutations, $k_i$, from $C_i$ in $S$.  Note that the value $k_i$ is equal to zero if no permutation in the sample is taken from class $C_i$.

Consider the values obtained from $\mathbf{x}$ under permutation for $\mathbf{x} \in \Omega_r$.  Assume that the permutations of $\mathfrak{S}_n$ are ordered with respect to the $D_x$ distinct values of the likelihood ratio satisfying (\ref{eq:lr-chain}).  Any given sample with replacement $S$, represented by the vector $\mathbf{k}$, determines values:
\begin{equation}\label{eq:a-weight}
{\hat w}_{i,\mathbf{k}} = \frac{k_i\sum_{\pi \in C_i} g_0\left(\pi \mathbf{x}\right)}{\sum_{j=1}^{D_x} k_j
 \sum_{\pi \in C_j} g_0\left(\pi \mathbf{x}\right)}.
\end{equation}
Let $\Delta_S$ be given by:
$$\Delta_s=\left\{\left(k_1,k_2,\ldots,k_{D_x}\right) \in \mathbb{Z}^{D_x} : k_i\geq0 \text{ and }
 \sum_{i=1}^{D_x} k_i = s \right\}. $$
Within $\Delta_s$ define two subsets each depending on $r$:
\begin{align*}
&H_r = \left\{ \mathbf{k}\in \Delta_s : \sum_{i=1}^r {\hat w}_{i,\mathbf{k}} < \alpha \right\} \text{ and}\\
&B_r = \left\{ \mathbf{k}\in \Delta_s : \sum_{i=1}^{r-1} {\hat w}_{i,\mathbf{k}} < \alpha \leq 
\sum_{i=1}^r {\hat w}_{i,\mathbf{k}} \right\}
\end{align*}
for a given $\alpha$.

For $\mathbf{x} \in \Omega_r$, the value of ${\hat \varphi}\left(\mathbf{x}\right)$, the approximate most powerful generalized permutation test, is given by the probability that $\mathbf{x}$ will be rejected under a test function of the form (\ref{eq:a-eg}) corresponding to a randomly selected sample with replacement $S$.  Explicitly:
\begin{proposition}
For $\mathbf{x} \in \Omega_r$, sample size $s$, and probability of selecting each permutation class denoted by the probability vector $\mathbf{p}$, the approximate generalized permutation test of significance $\alpha$ is given by:
\begin{align*}
{\hat \varphi}\left(\mathbf{x};\mathbf{p}\right) = 
&\sum_{\mathbf{k}\in H_r} \binom{s}{k_1, k_2, \ldots k_D} p_1^{k_1} p_2^{k_2} \ldots p_D^{k_D} + \\
&\sum_{\mathbf{k}\in B_r} {\hat \vartheta}_{\mathbf{k}} \binom{s}{k_1, k_2, \ldots k_D}
 p_1^{k_1} p_2^{k_2} \ldots p_D^{k_D}
\end{align*}
where ${\hat \vartheta}_{\mathbf{k}} = \left(\alpha - \sum_{i=1}^{r-1} {\hat w}_{i,\mathbf{k}}
 \right)/{\hat w}_{r,\mathbf{k}}$.
\end{proposition}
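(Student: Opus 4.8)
The plan is to recognize $\hat\varphi(\mathbf{x};\mathbf{p})$ as the expectation of the sample-specific test function $\hat\varphi_S(\mathbf{x})$ over the random draw of $S$, and then to evaluate that expectation by partitioning the space $\Delta_s$ of admissible count vectors. Since $S$ is drawn with replacement, each of the $s$ draws landing in class $C_i$ with probability $p_i$, the count vector $\mathbf{k}=(k_1,\ldots,k_D)$ is multinomially distributed, so that its probability is $\binom{s}{k_1,\ldots,k_D}p_1^{k_1}\cdots p_D^{k_D}$. Because $\hat\varphi(\mathbf{x};\mathbf{p})$ is by definition the probability that $\mathbf{x}$ is rejected, it equals
$$\hat\varphi(\mathbf{x};\mathbf{p}) = \sum_{\mathbf{k}\in\Delta_s} \hat\varphi_S(\mathbf{x})\,\binom{s}{k_1,\ldots,k_D}p_1^{k_1}\cdots p_D^{k_D},$$
where $\hat\varphi_S(\mathbf{x})$ is the value assigned to the sample encoded by $\mathbf{k}$ through (\ref{eq:a-eg}).

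First I would fix $\mathbf{x}\in\Omega_r$ and, for each $\mathbf{k}$, read off the three cases of (\ref{eq:a-eg}). The essential observation is that the sample weights $\hat w_{i,\mathbf{k}}$ of (\ref{eq:a-weight}) vanish precisely when $k_i=0$, so a class absent from $S$ contributes nothing to any cumulative sum. Consequently the cumulative weight computed over the full index set, $\sum_{i=1}^r \hat w_{i,\mathbf{k}}$, coincides with the cumulative weight $\sum_{i=1}^{\hat r}\hat w_i$ taken over only the classes present in $S$ up to the rank of $\mathbf{x}$, since the indices skipped in passing from the sample ordering to the full ordering carry zero weight. This identifies the event $\{\hat\varphi_S(\mathbf{x})=1\}$ with $\{\mathbf{k}\in H_r\}$ and the event $\{\hat\varphi_S(\mathbf{x})=\hat\vartheta_{\mathbf{k}}\}$ with $\{\mathbf{k}\in B_r\}$, while all remaining $\mathbf{k}$ give $\hat\varphi_S(\mathbf{x})=0$.

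With this identification the sum over $\Delta_s$ splits into three pieces, the last of which vanishes, yielding the claimed expression. Along the way I would verify that $\hat\vartheta_{\mathbf{k}}$ is well defined on $B_r$: membership in $B_r$ forces $\sum_{i=1}^{r-1}\hat w_{i,\mathbf{k}} < \sum_{i=1}^r \hat w_{i,\mathbf{k}}$, which in turn forces $\hat w_{r,\mathbf{k}}>0$ and hence $k_r>0$, so the denominator is never zero.

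The main obstacle is the bookkeeping in the second step: reconciling the sample-relative ranking $\hat r$ used in (\ref{eq:a-eg}) with the absolute ranking $r$ used to define $H_r$ and $B_r$, and confirming that the zero-weight contributions of unsampled classes make the two cumulative-weight thresholds agree term by term. Once this correspondence is in place, the remainder is the routine substitution of the multinomial mass function and the collection of terms.
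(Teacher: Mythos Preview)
Your approach is correct and matches what the paper intends: the proposition is stated there without a separate proof, as it is meant to follow immediately from the definition of $\hat\varphi$ as the rejection probability of the randomized procedure together with the multinomial law of the class-count vector $\mathbf{k}$. Your write-up supplies exactly the bookkeeping the paper leaves implicit---in particular the reconciliation of the sample-relative rank $\hat r$ with the absolute rank $r$ via the observation that $\hat w_{i,\mathbf{k}}=0$ whenever $k_i=0$, and the check that $\hat w_{r,\mathbf{k}}>0$ on $B_r$---so nothing further is needed.
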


Written in the above form, ${\hat \varphi}$ is seen to be a multinomial generalization of a Bernstein polynomial approximation (\cite{Lorentz}, \S 2.9, page 51).  Fixing $\mathbf{x}$, let $\Delta^*$ represent the $\left(D_x-1\right)$-th dimensional simplex:
$$\Delta^*=\left\{\left(p_1,p_2,\ldots,p_{D_x}\right) \in \mathbb{R}^{D_x} \left\vert\right. p_i\geq0 \text{ and }
 \sum_{i=1}^{D_x} p_i = 1 \right\}.$$
For points in $\Delta^*$, it is convenient to work in a modified $L^1$-norm:
\begin{equation}\label{eq:l1norm}
\|\mathbf{p}\|_{\mathbf{x}} = \sum_{i=1}^{D_x} |p_i| \sum_{\pi \in C_i} g_0\left(\pi \mathbf{x}\right)
\end{equation}
and the associated metric $\|\mathbf{p}-\mathbf{q}\|_{\mathbf{x}}$ for $\mathbf{p},\mathbf{q} \in \Delta^*$.  Weights analogous to (\ref{eq:a-weight}) defined based on the values of the function $g_0$ for all the permutations of $\mathbf{x}$:
\begin{equation*}
w^*_i = \frac{p_i\sum_{\pi \in C_i} g_0\left(\pi \mathbf{x}\right)}{\|\mathbf{p}\|_{\mathbf{x}}} =
 \frac{p_i\sum_{\pi \in C_i} g_0\left(\pi \mathbf{x}\right)}{\sum_{j=1}^D p_j
 \sum_{\pi \in C_j} g_0\left(\pi \mathbf{x}\right)}.
\end{equation*}
With respect to these weights, two subsets, analogous to $H_r$ and $B_r$, each depending on $r$ are defined within $\Delta^*$:
\begin{align*}
&H^*_r = \left\{ \mathbf{p}\in \Delta^* : \sum_{i=1}^r w^*_i < \alpha \right\} \text{ and}\\
&B^*_r = \left\{ \mathbf{p}\in \Delta^* : \sum_{i=1}^{r-1} w^*_i < \alpha \leq \sum_{i=1}^r w^*_i \right\}
\end{align*}
for a given $\alpha$.  Note that $\frac{1}{s}\mathbf{k} \in H_r^*$ if and only if $\mathbf{k} \in H_r$ and analogously $\frac{1}{s}\mathbf{k} \in B_r^*$ if and only if $\mathbf{k} \in B_r$.

For $\mathbf{k} \in \Delta_s$ and $\mathbf{p} \in \Delta^*$, define the multinomial kernel  $M_{\mathbf{k}}\left(\mathbf{p}\right)$ as follows:
$$ M_{\mathbf{k}}\left(\mathbf{p}\right) = \binom{s}{k_1, k_2, \ldots k_{D_x}} p_1^{k_1} p_2^{k_2}
 \ldots p_{D_x}^{k_{D_x}}.$$
The following concentration of measure property of the multinomial kernel is a variation on the Law of Large Numbers for multinomial distributions:
\begin{lemma} For arbitrary $\mathbf{p} \in \Delta^*$, define $\Delta_+$:
\begin{align*}
\Delta_+ &= \left\{ \mathbf{k} \in \Delta_s : \left\|\mathbf{p} - \frac{1}{s}\mathbf{k}\right\|_{\mathbf{x}} \geq
 \delta \right\} \\
&=\left\{ \mathbf{k} \in \Delta_s : \sum_{i=1}^{D_x} \left(p_i - \frac{k_i}{s}\right)
 \sum_{C_i} g_0 \left(\mathbf{x}\right) \geq \delta \right\}.
\end{align*}
There exist constants $c$ and $\sigma$ such that:
\begin{align*}
&\sum_{\mathbf{k} \in \Delta_+} M_{\mathbf{k}}\left(\mathbf{p}\right) \leq
 2 \Exp{\left(-\frac{1}{2}\frac{s \delta^2}{\sigma^2 + c \delta/3}\right)} \text{ and}\\
&\sum_{\mathbf{k} \in \Delta_+} M_{\mathbf{k}}\left(\mathbf{p}\right) \leq
 2 \Exp{\left(-\frac{s \delta^2}{2 c}\right)}
\end{align*}
for all $\delta > 0$.
\end{lemma}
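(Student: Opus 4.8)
The plan is to recognize the left-hand side as a genuine multinomial tail probability and then reduce the multivariate deviation to a one-dimensional large-deviation statement to which classical Bernstein- and Hoeffding-type inequalities apply. Observe first that $M_{\mathbf{k}}\left(\mathbf{p}\right)$ is exactly the probability mass function of $\mathbf{K}\sim\mathrm{Multinomial}\left(s,\mathbf{p}\right)$, so that $\sum_{\mathbf{k}\in\Delta_+} M_{\mathbf{k}}\left(\mathbf{p}\right)=\Prob\left(\mathbf{K}\in\Delta_+\right)$; by the multinomial theorem the kernel sums to $\left(\sum_i p_i\right)^s=1$ over all of $\Delta_s$, confirming it is a bona fide distribution. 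I would then represent the counts through i.i.d. draws: let $Z_1,\ldots,Z_s$ be independent class labels with $\Prob\left(Z_t=i\right)=p_i$, so that $K_i=\sum_{t=1}^s \indicator{Z_t=i}$.

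The key reduction is to write $a_i=\sum_{\pi\in C_i} g_0\left(\pi\mathbf{x}\right)\geq 0$ and set $Y_t=a_{Z_t}$. These are i.i.d., bounded by $c_0=\max_i a_i$, with mean $\Ex\left[Y_1\right]=\sum_i p_i a_i=\|\mathbf{p}\|_{\mathbf{x}}$ and variance $\sigma^2=\Var\left(Y_1\right)$. The identity
\[
\sum_{i=1}^{D_x}\left(p_i-\frac{K_i}{s}\right)a_i=\Ex\left[Y_1\right]-\frac1s\sum_{t=1}^s Y_t
\]
exhibits the weighted deviation appearing in the definition of $\Delta_+$ as the centred empirical mean of bounded i.i.d. variables, collapsing the $D_x$-dimensional problem onto a single scalar average. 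The two stated bounds then follow by applying Bernstein's inequality --- with variance proxy $\sigma^2$ and almost-sure bound $c$ on $\left|Y_1-\Ex Y_1\right|$ --- for the first estimate, and a Hoeffding-type sub-Gaussian bound for the second; the leading factor $2$ arises from taking the union over the upper and lower tails.

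The step requiring the most care is reconciling the two forms in which $\Delta_+$ is written: the weighted $L^1$-metric $\|\mathbf{p}-\tfrac1s\mathbf{k}\|_{\mathbf{x}}=\sum_i|p_i-k_i/s|\,a_i$ carries absolute values inside the sum, whereas the scalar reduction above controls only the signed combination $\sum_i\left(p_i-k_i/s\right)a_i$. If the signed statistic is what is intended, the argument above is complete once the constants are matched. If the full $L^1$-norm is required, I would instead invoke the bounded-differences (McDiarmid) inequality: altering a single label $Z_t$ changes at most two counts by one apiece and hence perturbs $\|\mathbf{p}-\tfrac1s\mathbf{k}\|_{\mathbf{x}}$ by at most $2c_0/s$, which after controlling the expectation $\Ex\|\mathbf{p}-\tfrac1s\mathbf{K}\|_{\mathbf{x}}=O\left(s^{-1/2}\right)$ yields a bound of the same exponential shape; alternatively one bounds each coordinate $|p_i-K_i/s|\,a_i$ separately and absorbs the resulting factor of $D_x$ into the constants $c$ and $\sigma$. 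Verifying that the chosen route reproduces precisely the constants $c$ and $\sigma$ displayed in the inequalities, rather than merely their functional form, is the remaining bookkeeping obstacle.
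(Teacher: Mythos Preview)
Your approach is essentially identical to the paper's: the paper defines a discrete random variable $U$ taking the value $\sum_{\pi\in C_i} g_0(\pi\mathbf{x})$ with probability $p_i$, takes $s$ i.i.d.\ copies, and applies Bernstein's and Hoeffding's inequalities to their empirical mean---your $Y_t=a_{Z_t}$ is precisely this $U$, and the constants $\sigma^2=\Var U$ and $c$ with $|U-\Ex U|\le c$ match exactly.

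Your observation about the discrepancy between the two displayed forms of $\Delta_+$ is sharp and goes beyond the paper: the weighted $L^1$-norm $\sum_i|p_i-k_i/s|\,a_i$ and the signed combination $\sum_i(p_i-k_i/s)\,a_i$ are genuinely different, and the paper's proof (like your primary argument) controls only the latter, with the factor $2$ coming from the two-sided bound on $\bigl|\Ex U-\tfrac1s\sum_t U_t\bigr|$. The paper does not address this inconsistency; the subsequent theorem invokes the $L^1$-norm version, so your McDiarmid or coordinate-wise alternatives would actually be needed to make the application fully rigorous.
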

\begin{proof}
Let $U$ be a discrete-valued random variable where:
$$ P\left[ U = \sum_{\pi \in C_i} g_0\left(\pi \mathbf{x}\right) \right] = p_i \text{ for } j=1,\ldots,D_x $$
and zero for all other values.  Consider $U_i$, $s$ independent random variables distributed identically to $U$.  The result follows from the application of the inequalities of Bernstein \cite{Uspensky} and Hoeffding \cite{Hoeffding3} respectively to the random variable defined by the mean of the $U_i$. Therefore the constant $\sigma^2$ is equal to $\Var{U}$ and $c$ is such that
$\left| U - \Ex{U} \right| \leq c$.
\end{proof}

For any function $f:\Delta^* \rightarrow \mathbb{R}$, the multinomial generalization of the Bernstein polynomial approximation is given by:
$$B^f_s\left(\mathbf{p}\right) = \sum_{\mathbf{k} \in \Delta_s}
f\left(\frac{k_1}{s},\frac{k_2}{s},\ldots\frac{k_{D_x}}{s}\right)
 \binom{s}{k_1, k_2, \ldots k_{D_x}} p_1^{k_1} p_2^{k_2} \ldots p_{D_x}^{k_{D_x}}.$$
Therefore, ${\hat \varphi}\left(\mathbf{x};\mathbf{p}\right)$, as a function of $\mathbf{p}$, is the multinomial Bernstein polynomial approximation to the function $\chi : \Delta^* \rightarrow \mathbb{R}$ defined:
\begin{align*}
\chi \left(\mathbf{p}\right) = \left\{
             \begin{array}{ll} 1 &\text{if } \mathbf{p} \in H^*_r \\
                               \vartheta  &\text{if } \mathbf{p} \in B^*_r \\
                               0 &\text{if } \mathbf{p} \not\in H^*_r \cup B^*_r 
             \end{array}
        \right.
\end{align*}
where $\vartheta = \left(\alpha - \sum_{i=1}^{r-1} w^*_i\right)/w^*_r$.  Analogous to traditional Bernstein polynomial approximations, convergence at a point $\mathbf{p} \in \Delta^*$ as $s$ approaches infinity depends on the continuity of $\chi$ at $\mathbf{p}$. 

\begin{lemma} For points $\mathbf{x} \in \Omega_r$, the function $\chi$, defined above, is continuous as a function of $\Delta^*$ in the modified $L^1$-norm metric (\ref{eq:l1norm}) at points $\mathbf{p}$ where $p_r$ is non-zero.
\end{lemma}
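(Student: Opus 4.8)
The plan is to reduce the statement to the elementary continuity of a ratio of linear functionals. Fixing $\mathbf{x} \in \Omega_r$, write $a_i = \sum_{\pi \in C_i} g_0(\pi \mathbf{x})$, which are fixed strictly positive constants since $g_0$ is positive. Then each weight is $w^*_i = p_i a_i / \sum_{j=1}^{D_x} p_j a_j$, and $\chi$ depends on $\mathbf{p}$ only through the partial sums
$$ S_k(\mathbf{p}) = \sum_{i=1}^k w^*_i = \frac{\sum_{i=1}^k p_i a_i}{\sum_{j=1}^{D_x} p_j a_j}. $$
First I would note that, because the $a_i$ are positive constants, the modified $L^1$-norm (\ref{eq:l1norm}) is equivalent to the usual $L^1$ metric on $\mathbb{R}^{D_x}$, so continuity in $\|\cdot\|_{\mathbf{x}}$ coincides with continuity in the standard topology on $\Delta^*$. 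On $\Delta^*$ the denominator $\sum_j p_j a_j = \|\mathbf{p}\|_{\mathbf{x}}$ is strictly positive (some $p_j > 0$ and all $a_j > 0$), so $S_{r-1}$ and $S_r$ are continuous.

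Next I would partition $\Delta^*$ by where $\alpha$ falls: $\mathbf{p} \in H^*_r$ exactly when $\alpha > S_r(\mathbf{p})$, $\mathbf{p} \in B^*_r$ when $S_{r-1}(\mathbf{p}) < \alpha \leq S_r(\mathbf{p})$, and $\chi(\mathbf{p}) = 0$ when $\alpha \leq S_{r-1}(\mathbf{p})$. The sets $\{\alpha > S_r\}$ and $\{\alpha < S_{r-1}\}$ are open, so on their interiors $\chi$ equals the constant $1$ or $0$ and is continuous with no hypothesis on $p_r$. The lemma therefore reduces to continuity at $\mathbf{p}$ lying on a boundary locus $\{S_r = \alpha\}$ or $\{S_{r-1} = \alpha\}$, or in the interior of $B^*_r$.

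I would then fix such a $\mathbf{p}$ with $p_r \neq 0$ and let $\mathbf{q} \to \mathbf{p}$ in $\Delta^*$. The crucial point is that
$$ w^*_r(\mathbf{p}) = S_r(\mathbf{p}) - S_{r-1}(\mathbf{p}) = \frac{p_r a_r}{\|\mathbf{p}\|_{\mathbf{x}}} > 0, $$
so by continuity $w^*_r$ is bounded away from zero near $\mathbf{p}$, and hence
$$ \vartheta(\mathbf{q}) = \frac{\alpha - S_{r-1}(\mathbf{q})}{S_r(\mathbf{q}) - S_{r-1}(\mathbf{q})} $$
is continuous on a neighbourhood. It then remains to verify that $\vartheta$ agrees with the adjacent constants across the two boundaries: as $\mathbf{q} \to \mathbf{p}$ with $S_r(\mathbf{p}) = \alpha$, numerator and denominator both tend to $\alpha - S_{r-1}(\mathbf{p}) > 0$ and $\vartheta \to 1$, matching the value on $H^*_r$; as $\mathbf{q} \to \mathbf{p}$ with $S_{r-1}(\mathbf{p}) = \alpha$, the numerator tends to $0$ and the denominator to $S_r(\mathbf{p}) - \alpha > 0$, so $\vartheta \to 0$, matching the zero region. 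Gluing these locally continuous pieces yields continuity at $\mathbf{p}$.

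The main obstacle, and the whole reason for the hypothesis $p_r \neq 0$, is the degenerate coincidence of the two boundaries, $S_{r-1}(\mathbf{p}) = S_r(\mathbf{p}) = \alpha$. This forces $w^*_r(\mathbf{p}) = 0$, equivalently $p_r = 0$; there $\vartheta$ assumes the indeterminate form $0/0$ and the one-sided limiting values coming from $H^*_r$ and from the zero region, namely $1$ and $0$, disagree, so $\chi$ genuinely jumps. Requiring $p_r \neq 0$ precisely excludes this case and keeps the $\vartheta$-denominator bounded away from zero, which is exactly what makes the gluing argument valid.
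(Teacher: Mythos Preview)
Your argument is correct and shares the paper's core idea: both establish continuity of the partial sums $S_k = \sum_{i \le k} w^*_i$ and of $\vartheta$, with the hypothesis $p_r \neq 0$ used precisely to keep the denominator $w^*_r(\mathbf{p}) = p_r a_r/\|\mathbf{p}\|_{\mathbf{x}}$ bounded away from zero. The paper carries this out computationally, first proving the explicit Lipschitz estimate $\bigl|\sum_{j_1}^{j_2} w^*_i(\mathbf{p}) - \sum_{j_1}^{j_2} w^*_i(\mathbf{q})\bigr| \le 2\|\mathbf{p}-\mathbf{q}\|_{\mathbf{x}}/\|\mathbf{p}\|_{\mathbf{x}}$ and then bounding $|\vartheta(\mathbf{p})-\vartheta(\mathbf{q})|$ directly, whereas your gluing presentation with explicit boundary matching ($\vartheta \to 1$ on $\{S_r=\alpha\}$, $\vartheta \to 0$ on $\{S_{r-1}=\alpha\}$) handles the piecewise structure more transparently than the paper's somewhat terse assertion that $|\chi(\mathbf{p})-\chi(\mathbf{q})| < |\vartheta(\mathbf{p})-\vartheta(\mathbf{q})|$ near $\mathbf{p}$.
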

\begin{proof}
Note that for $\mathbf{p}$, $\mathbf{q} \in \Delta^*$:
\begin{align*}
\left|\sum_{i=j_1}^{j_2}\right. &\left.w_i^*\left(\mathbf{p}\right) -
\sum_{i=j_i}^{j_2} w_i^*\left(\mathbf{q}\right)\right|\\
&= \left| \frac{\sum_{i=j_1}^{j_2} p_i \sum_{\pi \in C_i} g_0\left(\mathbf{x}\right)}{\left\|\mathbf{p}\right\|_{\mathbf{x}}} -
         \frac{\sum_{i=j_1}^{j_2} q_i \sum_{\pi \in C_i} g_0\left(\mathbf{x}\right)}{\left\|\mathbf{p}\right\|_{\mathbf{x}}} + \right.\\ 
&\quad\quad\quad\quad\quad\quad\quad\,\,\, \left. \frac{\sum_{i=j_1}^{j_2} q_i \sum_{\pi \in C_i}
 g_0\left(\mathbf{x}\right)}{\left\|\mathbf{p}\right\|_{\mathbf{x}}} -
         \frac{\sum_{i=j_1}^{j_2} q_i \sum_{\pi \in C_i} g_0\left(\mathbf{x}\right)}{\left\|\mathbf{q}\right\|_{\mathbf{x}}} \right|\\
&= \frac{1}{\left\|\mathbf{p}\right\|_{\mathbf{x}}}\left| \sum_{i=j_1}^{j_2} \left( p_i - q_i \right) \sum_{\pi \in C_i}
 g_0\left(\mathbf{x}\right) + \frac{\sum_{i=j_1}^{j_2} q_i \sum_{\pi \in C_i}
 g_0\left(\mathbf{x}\right)}{\left\|\mathbf{q}\right\|_{\mathbf{x}}} \left\|\mathbf{p} - \mathbf{q}\right\|_{\mathbf{x}} \right|\\
&\leq \frac{2}{\left\|\mathbf{p}\right\|_{\mathbf{x}}}\left\|\mathbf{p} - \mathbf{q}\right\|_{\mathbf{x}} .
\end{align*}
Thus for arbitrary $\varepsilon > 0$, there exists $\delta_1 > 0$ such that for $\|\mathbf{p} - \mathbf{q}\| < \delta_1$,
$$ \left| \sum_{i=j_1}^{j_2} w_i^*\left(\mathbf{p}\right) -  \sum_{j_1}^{j_2} w_i^*\left(\mathbf{q}\right) \right|
 <  \varepsilon $$
for all $1 \leq j_1 \leq j_2 \leq D_{\mathbf{x}}$.

Also there exists $\delta_2 > 0$ such that for $\mathbf{q} \in \Delta^*$ where $\left\|\mathbf{p} - \mathbf{q}\right\|_{\mathbf{x}} < \delta_2$,
$$ \left| \chi\left(\mathbf{p}\right) - \chi\left(\mathbf{q}\right) \right| < 
\left| \vartheta\left(\mathbf{p}\right) - \vartheta\left(\mathbf{q}\right) \right| $$
where $q_r > 0$.  Then,
\begin{align*}
\left|\vartheta\left(\mathbf{p}\right)\right.&-\left.\vartheta\left(\mathbf{q}\right)\right| = \\
&\left|\frac{\alpha - \sum_{i=1}^{r-1} w_i^*\left(\mathbf{p}\right)}{w_r^*\left(\mathbf{p}\right)} -
       \frac{\alpha - \sum_{i=1}^{r-1} w_i^*\left(\mathbf{q}\right)}{w_r^*\left(\mathbf{p}\right)} +\right.\\
       &\quad\quad\quad\quad\quad\quad\quad\quad\,\,\,\left.
\frac{\alpha - \sum_{i=1}^{r-1} w_i^*\left(\mathbf{q}\right)}{w_r^*\left(\mathbf{p}\right)} -
       \frac{\alpha - \sum_{i=1}^{r-1} w_i^*\left(\mathbf{q}\right)}{w_r^*\left(\mathbf{q}\right)} \right| \\
&=\frac{1}{w_r^*\left(\mathbf{p}\right)}\left|\sum_{i=1}^{r-1} \left( w_i^*\left(\mathbf{p}\right) -
w_i^*\left(\mathbf{q}\right)\right) + \vartheta\left(\mathbf{q}\right) \left(w_r^*\left(\mathbf{p}\right) -
w_r^*\left(\mathbf{q}\right)\right)\right| \\
&\leq \frac{1}{w_r^*\left(\mathbf{p}\right)}\left[\sum_{i=1}^{r-1} \left| w_i^*\left(\mathbf{p}\right) -
w_i^*\left(\mathbf{q}\right)\right| + \left|w_r^*\left(\mathbf{p}\right) - w_r^*\left(\mathbf{q}\right)\right|\right].
\end{align*}
Therefore, if 
$$\left\|\mathbf{p}-\mathbf{q}\right\|_{\mathbf{x}} < \Min\left\{\frac{p_r \varepsilon \sum_{\pi \in C_r} g_0\left(\mathbf{x}\right)}{4},
\delta_2\right\} $$
then
$$ \left| \sum_{i=j_1}^{j_2} w_i^*\left(\mathbf{p}\right) -  \sum_{j_1}^{j_2} w_i^*\left(\mathbf{q}\right) \right|
 < \frac{p_r \varepsilon \sum_{\pi \in C_r} g_0\left(\mathbf{x}\right)}{2 \left\|\mathbf{p}\right\|_{\mathbf{x}}}, $$
for all $1 \leq j_1 \leq j_2 \leq D_x$.  Consequently,
$$ \left| \chi\left(\mathbf{p}\right) - \chi\left(\mathbf{q}\right) \right| < \varepsilon $$
and the $\chi$ is shown to be continuous in the modified $L^1$-norm.
\end{proof}

\begin{theorem}
For $\mathbf{x} \in \Omega_r$, if all permutations of $\mathfrak{S}_n$ are equally likely to be selected in $S$, a sample with replacement of size $s$, then:
$$ \lim_{s \to \infty} {\hat \varphi}\left(\mathbf{x};\mathbf{p}\right) = \varphi \left(\mathbf{x}\right).$$
That is, the approximate generalized permutation test, ${\hat \varphi}$, converges pointwise to the most powerful generalized permutation test.
\end{theorem}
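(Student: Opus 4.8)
The plan is to treat this as a pointwise convergence statement for a multinomial Bernstein polynomial, leaning on the two lemmas that immediately precede it. Recall that $\hat\varphi\left(\mathbf{x};\mathbf{p}\right)$ has already been identified, as a function of $\mathbf{p}$, with the Bernstein approximation $B^\chi_s\left(\mathbf{p}\right) = \sum_{\mathbf{k}\in\Delta_s}\chi\left(\tfrac{1}{s}\mathbf{k}\right)M_{\mathbf{k}}\left(\mathbf{p}\right)$ of the threshold function $\chi$. Fixing $\mathbf{x}\in\Omega_r$ and letting $\mathbf{p}$ be the probability vector induced by uniform sampling, the target splits into two independent tasks: showing $B^\chi_s\left(\mathbf{p}\right)\to\chi\left(\mathbf{p}\right)$ as $s\to\infty$, and showing that this limit equals $\varphi\left(\mathbf{x}\right)$. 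The first is the classical Weierstrass-type convergence of Bernstein polynomials at a point of continuity, adapted to the simplex and the modified $L^1$-norm.

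First I would carry out the generic convergence argument. Write
$$\left|B^\chi_s\left(\mathbf{p}\right)-\chi\left(\mathbf{p}\right)\right|\le\sum_{\mathbf{k}\in\Delta_s}\left|\chi\left(\tfrac{1}{s}\mathbf{k}\right)-\chi\left(\mathbf{p}\right)\right|M_{\mathbf{k}}\left(\mathbf{p}\right)$$
and split $\Delta_s$ according to whether $\left\|\mathbf{p}-\tfrac{1}{s}\mathbf{k}\right\|_{\mathbf{x}}<\delta$ or $\mathbf{k}\in\Delta_+$. On the near part, continuity of $\chi$ at $\mathbf{p}$ in the modified $L^1$-norm supplies, for any prescribed $\varepsilon>0$, a $\delta>0$ making each summand at most $\varepsilon$; since $\sum_{\mathbf{k}\in\Delta_s}M_{\mathbf{k}}\left(\mathbf{p}\right)=1$, this part contributes at most $\varepsilon$. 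On the far part, $\chi$ takes values in $\left[0,1\right]$, so $\left|\chi\left(\tfrac{1}{s}\mathbf{k}\right)-\chi\left(\mathbf{p}\right)\right|\le1$, and the concentration-of-measure lemma bounds $\sum_{\mathbf{k}\in\Delta_+}M_{\mathbf{k}}\left(\mathbf{p}\right)\le 2\Exp\left(-s\delta^2/\left(2c\right)\right)$, which vanishes as $s\to\infty$ with $\delta$ held fixed. Taking $s\to\infty$ and then $\varepsilon\to0$ gives $B^\chi_s\left(\mathbf{p}\right)\to\chi\left(\mathbf{p}\right)$.

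Two hypotheses of this argument must be checked at the uniform vector, and this is precisely where the equally-likely assumption enters. I would verify, second, that $\chi$ is continuous at $\mathbf{p}$: the continuity lemma asks only for $p_r\ne0$, and under uniform sampling $p_r=\left|C_r\right|/n!$, which is strictly positive because the identity permutation lies in $C_r$ whenever $\mathbf{x}\in\Omega_r$. Third, I would identify $\chi\left(\mathbf{p}\right)$ with $\varphi\left(\mathbf{x}\right)$ by evaluating the starred weights $w^*_i$ at the uniform vector and confirming that they collapse onto the population weights $w_i$ of (\ref{eq:weight}); since $\chi$ and $\varphi$ share the identical threshold structure in terms of the partial sums $\sum_{i\le r}w^*_i$ and $\sum_{i\le r}w_i$, coincidence of the weights forces both the $0$/$1$ branches and the interpolating value $\vartheta$ to agree, giving $\chi\left(\mathbf{p}\right)=\varphi\left(\mathbf{x}\right)$.

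The $\varepsilon$--$\delta$ bookkeeping is fully absorbed by the two lemmas, so the main obstacle is this last identification: confirming that uniform sampling is exactly the choice of $\mathbf{p}$ for which the sample-based weights $w^*_i$ reduce to the population weights $w_i$, so that the Bernstein limit $\chi\left(\mathbf{p}\right)$ is the most powerful test $\varphi\left(\mathbf{x}\right)$ and not some reweighted variant. This step is what pins the limit to the correct target and is the crux separating the uniform case from general non-uniform sampling.
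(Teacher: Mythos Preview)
Your proposal is correct and follows essentially the same route as the paper: identify $\hat\varphi(\mathbf{x};\mathbf{p})$ with the multinomial Bernstein approximation $B^\chi_s(\mathbf{p})$, split $\Delta_s$ into near and far parts in the modified $L^1$-norm, control the near part by the continuity lemma (applicable because $p_r\neq 0$ under uniform sampling) and the far part by the concentration lemma, and finally identify $\chi(\mathbf{p})=\varphi(\mathbf{x})$ via $w^*_i(\mathbf{p})=w_i$. The only cosmetic differences are ordering (the paper verifies the identification $\chi(\mathbf{p})=\varphi(\mathbf{x})$ first and then runs the $\varepsilon$--$\delta$ argument) and that the paper uses $\varepsilon/2$ on each piece rather than taking $s\to\infty$ followed by $\varepsilon\to 0$.
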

\begin{proof}
If each permutation is equally likely to be selected in $S$, then the values of $p_i$ are identical and $w^*_i\left(\mathbf{p}\right) = w_i$ for all $i=1$,$\ldots$,$D_x$.  Therefore $\varphi\left(\mathbf{x}\right)=\chi\left(\mathbf{p}\right)$.  In particular, $p_i$ is non-zero and therefore $\chi$ is continuous at $\mathbf{p}$.  For arbitrary $\varepsilon > 0$, let $\delta$ be such that
$$ \left| \chi\left(\mathbf{p}\right) - \chi\left(\mathbf{q}\right) \right| < \frac{\varepsilon}{2} $$
whenever $\left\|\mathbf{p} - \frac{1}{s}\mathbf{k}\right\|_{\mathbf{x}} < \delta$ for $\mathbf{k} \in \Delta_s$.

Note that:
\begin{align*}
\left| {\hat \varphi}\left(\mathbf{x};\mathbf{p}\right) - \chi\left(\mathbf{p}\right) \right| &=
\left| \sum_{\mathbf{k} \in \Delta_s} \chi\left(\frac{\mathbf{k}}{s}\right)M_{\mathbf{k}}\left(\mathbf{p}\right) -
\chi\left(\mathbf{p}\right)\sum_{\mathbf{k} \in \Delta_s} M_{\mathbf{k}}\left(\mathbf{p}\right) \right| \\
&\leq \sum_{\mathbf{k} \in \Delta_s} \left| \chi\left(\frac{\mathbf{k}}{s}\right) - \chi\left(\mathbf{p}\right)
\right| M_{\mathbf{k}}\left(\mathbf{p}\right).
\end{align*}
Therefore, breaking this summation into two parts, where:
\begin{align*}
\Delta_+ &= \left\{ \mathbf{k} \in \Delta_s : \left\|\mathbf{p} - \frac{1}{s}\mathbf{k}\right\|_{\mathbf{x}} \geq
 \delta \right\} \text{ and}\\
\Delta_- &= \left\{ \mathbf{k} \in \Delta_s : \left\|\mathbf{p} - \frac{1}{s}\mathbf{k}\right\|_{\mathbf{x}} <
 \delta \right\},
\end{align*}
results in:
\begin{align*}
\left| {\hat \varphi}\left(\mathbf{x};\mathbf{p}\right) - \chi\left(\mathbf{p}\right) \right| &\leq
\sum_{\mathbf{k} \in \Delta_-} \left| \chi\left(\frac{\mathbf{k}}{s}\right) - \chi\left(\mathbf{p}\right)
\right| M_{\mathbf{k}}\left(\mathbf{p}\right) +\\
&\sum_{\mathbf{k} \in \Delta_+} \left| \chi\left(\frac{\mathbf{k}}{s}\right) - \chi\left(\mathbf{p}\right)
\right| M_{\mathbf{k}}\left(\mathbf{p}\right).
\end{align*}
By continuity, the first summand is less than $\frac{\varepsilon}{2}$.  Furthermore the absolute difference in values of $\chi$ is bounded by one:
$\left| \chi\left(\mathbf{p}\right) - \chi\left(\mathbf{q}\right)\right| \leq 1$, therefore:
$$ \left| {\hat \varphi}\left(\mathbf{x};\mathbf{p}\right) - \chi\left(\mathbf{p}\right) \right| \leq
\frac{\varepsilon}{2} + \sum_{\mathbf{k} \in \Delta_+} M_{\mathbf{k}}\left(\mathbf{p}\right). $$
Consequently, by the concentration of measures inequality for $M_{\mathbf{k}}\left(\mathbf{p}\right)$, there exists $s$ sufficiently large that:
$$\sum_{\mathbf{k} \in \Delta_+} M_{\mathbf{k}}\left(\mathbf{p}\right) < \frac{\varepsilon}{2}$$
and convergence follows.
\end{proof}

\section{Estimation of Significance}

\subsection{Direct Approach}

Consider the case where the goal of the analysis is to estimate a $p$-value for a particular sample within the context of a given null hypothesis, deciding whether or not to reject that null hypothesis.  In other words, the goal of the analysis is to determine the maximum significance level $\alpha$ for which a given null hypotheses would be rejected were the test performed, and thereby quantify the ``extremeness'' of the sample under a given null model.  For the generalized permutation test, formally, these approaches differ in a number of subtle aspects, most notably in the treatment of sample points that have the same likelihood ratio.  Given positive measurable functions $g_0$ and $g_1$ representing the probability densities of null and alternative hypotheses respectively, significance level is defined as follows.

\begin{definition} For $\mathbf{x} \in \Omega_r$ with non-zero $h$, the significance of the most powerful generalized permutation test is defined to be:
$$ \alpha = \sum_{i=1}^{r-1} w_i = \frac{\sum_{i=1}^{r-1} \sum_{\pi \in C_i}
 g_0\left(\pi \mathbf{x}\right)}{n! h\left(\mathbf{x}\right)}$$
where $\mathfrak{S}_n$ is ordered following (\ref{eq:lr-chain}).
\end{definition}

Analogous to the case of hypothesis testing using the most powerful generalized permutation test, calculating the significance requires computing the likelihood ratio of $g_1$ and $g_0$ for all $n!$ permutations of any sample point $\mathbf{x}$.  Given that this is, in most practical case, not feasible, approximate approaches are required.  The most direct Monte Carlo approximation approach consists of considering a sample of the full group of permutation and calculating likelihood ratios within that sample: the resulting estimate of significance being the fraction of the total number of sample permutations with likelihood ratios larger than the likelihood ratio of the original sample point.

\begin{theorem}
For probability densities $g_0$ and $g_1$ representing null and alternative hypotheses respectively, let $S$ be a sample with replacement from $\mathfrak{S}_n$ of size $s$.  Assume $h\left(\mathbf{x}\right)$ is non-zero.  Let $\Sigma_U$ denote the sum of the null hypothesis values over the sample of permutations $S$ with likelihood ratio less than $l\left(\mathbf{x}\right) = g_1\left( \mathbf{x} \right) / g_0\left( \mathbf{x} \right)$ and let $\Sigma_V$ denote the sum of all of the null hypothesis values. If each permutation in $\mathfrak{S}_n$ is equally likely to be selected for $S$, then:
\begin{align*}
&\Prob\left[ \left| \frac{\Sigma_U}{\Sigma_V}  - \alpha \right| > \varepsilon \right] \leq
\Exp\left({-\frac{1}{2}\frac{s \varepsilon^2 {\bar v}^2}{\tau_1 + c \varepsilon /3}}\right) +
\Exp\left({-\frac{1}{2}\frac{s \varepsilon^2 {\bar v}^2}{\tau_1^\prime + c^\prime \varepsilon /3}}\right) 
\text{ and}\\
&\Prob\left[ \left| \frac{\Sigma_U}{\Sigma_V}  - \alpha \right| > \varepsilon \right] \leq
\Exp\left(-2 s \varepsilon^2 \tau_2^2\right) + \Exp\left(-2 s \varepsilon^2 {\tau_2^\prime}^2\right)
\end{align*}
where $\alpha$ is the significance of the most powerful generalized permutation test and $0 < \varepsilon < \Min\left\{\alpha, 1-\alpha\right\}$.  In particular, as the size of the sample increases towards infinity, the significance of the approximate generalized permutation test approaches the significance of the most powerful generalized permutation test.
\end{theorem}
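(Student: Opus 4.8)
The plan is to exploit the fact that, under uniform sampling with replacement, both $\Sigma_U$ and $\Sigma_V$ are sums of $s$ independent and identically distributed terms, and to reduce the deviation of their \emph{ratio} to the deviation of a single centred mean, to which the inequalities of Bernstein \cite{Uspensky} and Hoeffding \cite{Hoeffding3} apply directly, exactly as in the concentration lemma above.

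First I would record the i.i.d.\ representation. Writing the sample as $\pi^{(1)},\ldots,\pi^{(s)}$ drawn independently and uniformly from $\mathfrak{S}_n$, set $V_j = g_0\left(\pi^{(j)}\mathbf{x}\right)$ and $U_j = g_0\left(\pi^{(j)}\mathbf{x}\right)\indicator{\pi^{(j)} \in C_1\cup\cdots\cup C_{r-1}}$, so that $\Sigma_V = \sum_{j=1}^s V_j$ and $\Sigma_U = \sum_{j=1}^s U_j$, the latter collecting exactly the classes whose likelihood ratio exceeds $l\left(\mathbf{x}\right)$ and hence defines the significance $\alpha$. Since each $\pi^{(j)}$ is uniform, $\Ex\left[V_j\right] = \frac{1}{n!}\sum_{\pi \in \mathfrak{S}_n} g_0\left(\pi\mathbf{x}\right) = h\left(\mathbf{x}\right)$, and summing $g_0$ only over $C_1,\ldots,C_{r-1}$ gives $\Ex\left[U_j\right] = \alpha\, h\left(\mathbf{x}\right)$ directly from the definition of significance. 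Because $\mathbf{x}$ has only finitely many permutation images, every $V_j$ and $U_j$ is bounded; their variances and ranges are therefore finite and furnish the constants $\tau_1,\tau_1',\tau_2,\tau_2',c,c'$, and I would take $\bar v = h\left(\mathbf{x}\right) = \Ex\left[V_j\right]$.

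The crux is an exact linearisation of the ratio. Splitting $\left\{\left|\Sigma_U/\Sigma_V - \alpha\right| > \varepsilon\right\}$ into its two one-sided events, I would observe that on $\left\{\Sigma_V > 0\right\}$
\[
\frac{\Sigma_U}{\Sigma_V} > \alpha + \varepsilon \iff \sum_{j=1}^s Z_j^{+} > 0, \qquad Z_j^{+} = U_j - \left(\alpha+\varepsilon\right)V_j,
\]
and symmetrically $\Sigma_U/\Sigma_V < \alpha - \varepsilon \iff \sum_{j=1}^s Z_j^{-} < 0$ with $Z_j^{-} = U_j - \left(\alpha-\varepsilon\right)V_j$. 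This is the key step: the ratio, built from two \emph{dependent} sums that share the same sampled permutations, is replaced exactly---with no approximation error---by the sign of a single i.i.d.\ sum. Since $\Ex\left[Z_j^{\pm}\right] = \alpha h\left(\mathbf{x}\right) - \left(\alpha\pm\varepsilon\right)h\left(\mathbf{x}\right) = \mp\varepsilon\,\bar v$, each one-sided event is precisely a deviation of the mean $\frac1s\sum_j Z_j^{\pm}$ from its expectation by the \emph{deterministic} amount $\varepsilon\,\bar v$ (the hypothesis $0 < \varepsilon < \Min\left\{\alpha,1-\alpha\right\}$ keeps both shifts $\alpha\pm\varepsilon$ within $\left(0,1\right)$, the range of $\Sigma_U/\Sigma_V$, so that neither one-sided event is vacuous). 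The exceptional event $\left\{\Sigma_V = 0\right\}$ is harmless: there every $V_j=0$, whence every $U_j=0$ and $\sum_j Z_j^{\pm}=0$, so it lies in neither one-sided event and in any case has probability decaying geometrically in $s$.

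It then remains to bound each tail. Applying Bernstein's inequality to the centred means $\frac1s\sum_j Z_j^{+}$ and $\frac1s\sum_j Z_j^{-}$ with deviation $\varepsilon\bar v$ produces the first displayed pair of exponential bounds, the variances of $Z_j^{+}$ and $Z_j^{-}$ supplying $\tau_1$ and $\tau_1'$ and their ranges (together with $\bar v$) supplying $c$ and $c'$; applying Hoeffding's inequality to the same two means produces the second pair, with $\tau_2$ and $\tau_2'$ again determined by $\bar v$ and the ranges. A union bound over the two one-sided events gives the two stated inequalities. The final convergence assertion is then immediate: for fixed $\varepsilon$ with $0 < \varepsilon < \Min\left\{\alpha,1-\alpha\right\}$ every term on the right tends to $0$ as $s\to\infty$, so $\Sigma_U/\Sigma_V \to \alpha$ in probability. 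I expect the one genuinely delicate point to be this ratio-to-sign reduction, since it is exactly what allows a one-dimensional concentration inequality to control a ratio of dependent sums without incurring any approximation error; the remaining estimates are routine instances of the two inequalities already invoked above.
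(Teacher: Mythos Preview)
Your proposal is correct and follows essentially the same route as the paper: both introduce i.i.d.\ pairs $(U_j,V_j)$, perform the exact ratio-to-sign reduction $\Sigma_U/\Sigma_V > \alpha+\varepsilon \iff \sum_j\bigl(U_j-(\alpha+\varepsilon)V_j\bigr)>0$ (and its mirror image), compute the means $\mp\varepsilon\bar v$, and finish with Bernstein's and Hoeffding's inequalities plus a union bound. The only cosmetic differences are that the paper sets up the sampling at the level of classes $C_i$ rather than individual permutations (so its $\bar v$ carries an extra factor $n!/D_x$), and it scales its auxiliary variables by $\bar v$, writing $Z_i = U_i\bar v-(\bar u+\varepsilon\bar v)V_i$; neither change affects the sign condition or the structure of the argument.
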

\begin{proof}
Consider $S$, a sample with replacement from $\mathfrak{S}_n$ of size $s$.  Assume that each permutation in $\mathfrak{S}_n$ is equally likely to be selected for $S$.  For $\mathbf{x} \in \Omega_r$ and $\mathfrak{S}_n$ ordered according to (\ref{eq:lr-chain}), consider the vector-valued, discrete random variable $\left(U,V\right)$ where $\Prob\left[\left(U,V\right)=\left(u,v\right)\right]=0$ for all $u,v$ except:
\begin{align*}
&\Prob\left[\left(U,V\right)=\left(\sum_{\pi \in C_i} g_0\left(\pi \mathbf{x}\right),
\sum_{\pi \in C_i} g_0\left(\pi \mathbf{x}\right)\right)\right]
 = \frac{1}{D_x} \text{ for } i < r \leq D_x\\
&\Prob\left[\left(U,V\right)=\left(0,\sum_{\pi \in C_i} g_0\left(\pi \mathbf{x}\right)\right)\right] =
 \frac{1}{n!} \text{ for } r \leq i \leq D_x.
\end{align*}

Letting:
\begin{align*}
{\bar u}&= \frac{1}{D_x}\sum_{i=1}^{r-1} \sum_{\pi \in C_i} g_0 \left(\pi \mathbf{x}\right) \text{ and } \\
{\bar v}&= \frac{1}{D_x} n! h \left(\mathbf{x}\right),
\end{align*}
the expected value of $\left(U,V\right)$, $\Ex\left[\left(U,V\right)\right] = \left({\bar u},{\bar v}\right)$ and the significance of the generalized permutation test, $\alpha$, is ${\bar u}/{\bar v}$. 

Consider $\left(U_i,V_i\right)$ for $i=1$,$\ldots$,$s$ independent random variables distributed identically to $\left(U,V\right)$.  Since:
$$0 \leq \sum_{i=1}^s U_i \leq \sum_{i=1}^s V_i \text{ and } \sum_{i=1}^s V_i > 0,$$
it follows that:
$$\Prob\left[ \frac{\sum_{i=1}^s U_i}{\sum_{i=1}^s V_i} - \alpha > 1 - \alpha \right] =
\Prob\left[ \frac{\sum_{i=1}^s U_i}{\sum_{i=1}^s V_i} - \alpha < -\alpha \right] = 0.$$
However, for $0 < \varepsilon < 1 - \alpha$, define:
\begin{align*}
Z_i &= U_i {\bar v} - \left({\bar u} + \varepsilon {\bar v}\right)V_i \\
Z_i^\prime &= U_i {\bar v} - \left({\bar u} - \varepsilon {\bar v}\right)V_i,
\end{align*}
then
$$ \frac{\sum_{i=1}^s U_i}{\sum_{i=1}^s V_i} - \frac{{\bar u}}{{\bar v}} > \varepsilon $$
if and only if $\sum_{i=1}^s Z_i > 0$.  Similarly:
$$ \frac{\sum_{i=1}^s U_i}{\sum_{i=1}^s V_i} - \frac{{\bar u}}{{\bar v}} <  -\varepsilon $$
if and only if $\sum_{i=1}^s Z_i^\prime < 0$.

Letting:
\begin{align*}
&a = \Min \left\{\pi_i g_0\left(\mathbf{x}\right):i=1,\ldots,n!\right\} \text{ and } \\
&b = \Max \left\{\pi_i g_0\left(\mathbf{x}\right):i=1,\ldots,n!\right\},
\end{align*}
then:
\begin{align*}
&\Ex\left[Z_i\right] = -\varepsilon {\bar v}^2, \\
&\Min\left(Z_i - \Ex\left[Z_i\right]\right) = - \left({\bar u} + \varepsilon {\bar v}\right)b + \varepsilon{\bar v}^2,
 \text{ and }\\
&\Max\left(Z_i - \Ex\left[Z_i\right]\right) = {\bar v}b - \left({\bar u} + \varepsilon {\bar v}\right)a +
 \varepsilon{\bar v}^2.
\end{align*}

If in addition $0 < \varepsilon < \alpha$, then:
\begin{align*}
&\Ex\left[Z_i^\prime\right] = \varepsilon {\bar v}^2, \\
&\Min\left(Z_i^\prime - \Ex\left[Z_i^\prime\right]\right) = - \left({\bar u} - \varepsilon {\bar v}\right)b -
 \varepsilon{\bar v}^2,
 \text{ and }\\
&\Max\left(Z_i^\prime - \Ex\left[Z_i^\prime\right]\right) = {\bar v}b - \left({\bar u} - \varepsilon {\bar v}\right)a -
 \varepsilon{\bar v}^2.
\end{align*}
By Berstein's Inequality (\cite{Uspensky}, p. 205):
\begin{align*}
&\Prob\left[ \sum_{i=0}^s \left(Z_i - \Ex\left[Z_i\right]\right) > t\right]
 \leq \Exp\left({-\frac{t^2}{2 s \Var{Z_i}+2 c t/3}}\right) \text{ and}\\
&\Prob\left[ \sum_{i=0}^s \left(Z_i^\prime - \Ex\left[Z_i^\prime\right]\right) < -t\right]
 \leq \Exp\left({-\frac{t^2}{2 s \Var{Z_i^\prime}+2c^\prime t/3}}\right)
\end{align*}
for $t>0$ where:
\begin{align*}
&c=\Max\left\{\left({\bar u} - \varepsilon {\bar v}\right)b - \varepsilon{\bar v}^2 ,
{\bar v}b - \left({\bar u} + \varepsilon {\bar v}\right)a + \varepsilon{\bar v}^2\right\} \text{ and}\\
&c^\prime=\Max\left\{\left({\bar u} - \varepsilon {\bar v}\right)b + \varepsilon{\bar v}^2,
{\bar v}b - \left({\bar u} - \varepsilon {\bar v}\right)a - \varepsilon{\bar v}^2\right\}.
\end{align*}
Setting $t = -s \Ex\left[Z_i\right] = s \varepsilon {\bar v}^2$,
$$ \Prob\left[ \sum_{i=0}^s Z_i  > 0\right]
\leq \Exp\left({-\frac{s^2 \varepsilon^2 {\bar v}^4}{2 s \Var{Z_i}+2 c s {\bar v}^2/3}}\right), $$
or equivalently:
$$ \Prob\left[ \frac{\sum_{i=0}^s U_i}{\sum_{i=0}^s V_i} - \alpha  > \varepsilon\right]
\leq \Exp\left({-\frac{1}{2}\frac{s \varepsilon^2 {\bar v}^2}{\tau_1 + c \varepsilon /3}}\right)$$
for $\tau_1 = \Var{Z_i} / {\bar v}^2$.
Similarly for $t = s \Ex\left[Z_i^\prime\right] = s \varepsilon {\bar v}^2$,
$$ \Prob\left[ \sum_{i=0}^s Z_i^\prime  < 0\right]
\leq \Exp\left({-\frac{s^2 \varepsilon^2 {\bar v}^4}{2 s \Var{Z_i^\prime}+2 c^\prime s {\bar v}^2/3}}\right), $$
or equivalently:
$$ \Prob\left[ \frac{\sum_{i=0}^s U_i}{\sum_{i=0}^s V_i} - \alpha  < -\varepsilon\right]
\leq \Exp\left({-\frac{1}{2}\frac{s \varepsilon^2 {\bar v}^2}{\tau^\prime + c^\prime \varepsilon /3}}\right)$$
for $\tau_1^\prime = \Var{Z_i^\prime} / {\bar v}^2$.

The second inequality is obtained by applying Hoeffding's inequality (\cite{Hoeffding3}) in place of Bernstein's.  Letting: 
\begin{align*}
&\tau_2 = \frac{{\bar v}^2}{{\bar u}\left(b-a\right)+{\bar v}\left(\left(1+\varepsilon\right)b - \varepsilon a\right)}
\text{ and } \\
&\tau_2^\prime = \frac{{\bar v}^2}{{\bar u}\left(b-a\right)+{\bar v}\left(\left(1-\varepsilon\right)b -
 \varepsilon a\right)},
\end{align*}
results in the inequalities:
\begin{align*}
&\Prob\left[ \frac{\Sigma_U}{\Sigma_V}  - \alpha > \varepsilon \right] \leq
\Exp\left(-2 s \varepsilon^2 \tau_2^2\right) \text{ and } \\
&\Prob\left[ \frac{\Sigma_U}{\Sigma_V}  - \alpha < -\varepsilon \right] \leq
\Exp\left(-2 s \varepsilon^2 {\tau_2^\prime}^2\right)
\end{align*}
\end{proof}

\subsection{Indirect Approach}

Although the estimator derived from direct approach results in a bound on the absolute deviation from the significance level of the most powerful generalized permutation test, a more analytically tractable approach can be obtained by introducing unequal probability of selecting each permutation in the sample.  Consider $S$, a sample with replacement from $\mathfrak{S}_n$ of size $s$.  Let $\mathbf{x} \in \Omega_r$ and $\mathfrak{S}_n$ be ordered according to (\ref{eq:lr-chain}).  Assume that each of the $n!/D_x$ permutations in each class $C_i \subseteq \mathfrak{S}_n$ have the same probability; however, the overall probability of selecting a permutation from class $C_i$ is given by $p_i$.  That is, for $\pi \in C_i$, the probability of selecting $\pi$ for $S$ is $p_i D_x / n!$.

\begin{theorem}
For probability densities $g_0$ and $g_1$ representing null and alternative hypotheses respectively, let $S$ be a sample with replacement from $\mathfrak{S}_n$ of size $s$.  Assume $h\left(\mathbf{x}\right)$ is non-zero.  Let $\Sigma_U$ denote the fraction of null hypothesis values in the sample of permutations $S$ with likelihood ratio greater than
$l\left(\mathbf{x}\right) = g_1\left( \mathbf{x} \right) / g_0\left( \mathbf{x} \right)$. If the probability of selecting a permutation in $C_i \subseteq \mathfrak{S}_n$ is:
$$p_i= \frac{\sum_{\pi \in C_i} g_0\left(\pi \mathbf{x}\right)}{n! h\left(\mathbf{x}\right)},$$
then $\Sigma_U$ is an unbiased estimator of the significance of the most powerful generalized permutation test and:
\begin{align*}
&\Prob\left[ \left| \frac{1}{s}\sum_{i=1}^s U_i - \sum_{i=1}^r p_i \right| > \varepsilon\right] \leq
2 \Exp{\left(-2 s \varepsilon^2\right)} \text{ and} \\
&\Prob\left[ \left| \frac{1}{s}\sum_{i=1}^s U_i - \sum_{i=1}^r p_i \right| > \varepsilon\right] \leq
2 \Exp{\left(-\frac{1}{2}\frac{s \varepsilon^2}{\sigma^2+c \varepsilon /3}\right)}\\
\end{align*}
where $\alpha$ is the significance of the most powerful generalized permutation test and $\varepsilon > 0$.
\end{theorem}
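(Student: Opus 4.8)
The plan is to exploit the defining feature of the indirect design: by choosing the class-selection probabilities to equal the null-hypothesis weights, $p_i = w_i = \left(\sum_{\pi \in C_i} g_0\left(\pi \mathbf{x}\right)\right)/\left(n!\,h\left(\mathbf{x}\right)\right)$ as in (\ref{eq:weight}), the ratio-type estimator of the direct approach collapses into a plain empirical average of bounded, independent and identically distributed indicators. First I would make the sampling explicit: for the $j$-th draw ($j=1,\ldots,s$) let $U_j$ be the Bernoulli variable equal to $1$ when the selected permutation $\pi$ satisfies $l\left(\pi\mathbf{x}\right) > l\left(\mathbf{x}\right)$ and $0$ otherwise. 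By the ordering (\ref{eq:lr-chain}), $l\left(\pi\mathbf{x}\right) > l\left(\mathbf{x}\right) = l_r\left(\mathbf{x}\right)$ holds exactly when $\pi$ lies in one of the classes $C_1,\ldots,C_{r-1}$, so that $\Ex[U_j] = \sum_{i=1}^{r-1} p_i$. Substituting $p_i = w_i$ and invoking the definition of significance gives $\Ex[U_j] = \sum_{i=1}^{r-1} w_i = \alpha$, which establishes that $\Sigma_U = \frac{1}{s}\sum_{j=1}^{s} U_j$ is unbiased for the significance of the most powerful generalized permutation test. Because the draws are made with replacement, the $U_j$ are independent and identically distributed, each supported on $\{0,1\}$.

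Given this reduction, both bounds follow by applying the two concentration results already used in the direct approach, now to trivially bounded summands. For the first bound I would apply Hoeffding's inequality (\cite{Hoeffding3}) to the i.i.d.\ variables $U_j \in [0,1]$; since the range satisfies $b-a = 1$, the two-sided form yields
$$\Prob\left[\left|\tfrac{1}{s}\sum_{j=1}^s U_j - \alpha\right| > \varepsilon\right] \leq 2\Exp\left(-2 s \varepsilon^2\right)$$
directly, with no auxiliary random variables required. For the second bound I would apply Bernstein's inequality (\cite{Uspensky}) to the centred variables $U_j - \Ex[U_j]$, using $\sigma^2 = \Var{U_j} = \alpha\left(1-\alpha\right)$ together with the crude envelope $\left|U_j - \Ex[U_j]\right| \leq 1 =: c$, which produces
$$\Prob\left[\left|\tfrac{1}{s}\sum_{j=1}^s U_j - \alpha\right| > \varepsilon\right] \leq 2\Exp\left(-\tfrac{1}{2}\frac{s \varepsilon^2}{\sigma^2 + c \varepsilon/3}\right).$$
Convergence of $\Sigma_U$ to $\alpha$ as $s \to \infty$ is then immediate from either display.

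The contrast with the direct approach is where the genuine content lies, and it is also the only step demanding care. There the estimator is a ratio $\sum U_i / \sum V_i$, forcing the auxiliary families $Z_i, Z_i^\prime$ and a separate treatment of the two tails before Bernstein and Hoeffding could be invoked, whereas here the $w_i$-proportional sampling linearises the estimator so that the envelope and variance constants become explicit and the tails symmetric. I therefore expect the main obstacle to be essentially bookkeeping rather than any delicate estimate: verifying unbiasedness under the prescribed $p_i$ and confirming that the indicator structure legitimately replaces the weighted sums of the direct approach. A minor point to reconcile is the index of the partial sum, $\sum_{i=1}^{r-1}$ versus $\sum_{i=1}^{r}$, against the convention fixing $l\left(\mathbf{x}\right) = l_r\left(\mathbf{x}\right)$; this must be aligned with the definition of significance so that the centring constant in both displays is exactly $\alpha$.
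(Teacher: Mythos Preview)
Your proposal is correct and follows essentially the same route as the paper: reduce $\Sigma_U$ to an average of i.i.d.\ Bernoulli indicators with success probability $\sum_{i=1}^{r-1} p_i = \alpha$, then apply Hoeffding's and Bernstein's inequalities directly. The only cosmetic difference is that the paper records the slightly sharper Bernstein constant $c = \max\{\alpha, 1-\alpha\}$ rather than your crude envelope $c=1$, and your observation about the $r$ versus $r-1$ indexing in the displayed partial sums is well taken.
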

\begin{proof}
Define the Bernoulli random variable $U$ by:
\begin{align*}
&\Prob\left[U=1\right] = \sum_{i=1}^{r-1} p_i \\
&\Prob\left[U=0\right] = 1 - \sum_{i=1}^{r-1} p_i = \sum_{i=r}^{D_x} p_i.
\end{align*}
Consider $U_1$,$U_2$,$\ldots$,$U_s$ independent and distributed identically to $U$. By Hoeffding's (\cite{Hoeffding3}) and Bernstein's (\cite{Uspensky}) inequalities:
\begin{align*}
&\Prob\left[ \left| \frac{1}{s}\sum_{i=1}^s U_i - \sum_{i=1}^r p_i \right| > \varepsilon\right] \leq
2 \Exp{\left(-2 s \varepsilon^2\right)} \text{ and} \\
&\Prob\left[ \left| \frac{1}{s}\sum_{i=1}^s U_i - \sum_{i=1}^r p_i \right| > \varepsilon\right] \leq
2 \Exp{\left(-\frac{1}{2}\frac{s \varepsilon^2}{\sigma^2+c \varepsilon /3}\right)}\\
\end{align*}
where $c=\Max\left\{1 - \sum_{i=1}^{r-1} p_i , \sum_{i=1}^{r-1} p_i\right\}$ and 
$\sigma^2 = \left(1 - \sum_{i=1}^{r-1} p_i\right)\left(\sum_{i=1}^{r-1} p_i\right)$ for all $\varepsilon > 0$.
\end{proof}

The difficulty in applying the theorem is that in order to construct the discrete probability distribution on the permutations it is necessary to compute $n!$ values of $g_0$.  In order to avoid this excessive computation, consider the following approach to specifying the probability distribution on $\mathfrak{S}_n$.

For $\mathbf{x} \in \Omega_r$, let $\mathbf{x} \in \pi_0 T$ for some $\pi_0 \in \mathfrak{S}_n$ and let $\mathfrak{S}_n$ be ordered according to:
\begin{equation}\label{eq:lr-chain2}
l_1\left(\mathbf{x}\right) > l_2\left(\mathbf{x}\right) > l\left(\mathbf{x}\right) = l_r\left(\mathbf{x}\right) >
 \ldots > l_D\left(\mathbf{x}\right).
\end{equation}
Let ${\hat g}_0$ be an approximation, parametric or non-parametric, to $g_0$ constructed from the values in $g_0\left(\mathbf{x}\right)$.
For any randomly selected vector $\mathbf{x}^*$, there exists some permutation $\pi$ such that $\pi \mathbf{x}^* \in \pi_0 T$.  Explicitly, if $\mathbf{x}^* \in \pi^\prime T$, then for $\pi = \pi_0 {\pi^\prime}^{-1}$, $\pi \mathbf{x}^* \in \pi_0 T$.  That is, $\pi$ is the permutation that moves $\mathbf{x}^*$ into the asymmetric unit containing $\mathbf{x}$.  Therefore, each random selection $\mathbf{x}^*$ induces a particular permutation.

The probability of selecting any particular permutation depends on the probability of selecting a point from any given asymmetric unit of $\Omega$.  For any $\pi_i$ in (\ref{eq:lr-chain2}), $\pi_i$ will be selected if $\pi_i = \pi_0 {\pi^\prime}^{-1}$ or equivalently $\pi^\prime = {\pi_i}^{-1} \pi_0$.  Thus $\pi_i$ will be selected if $\mathbf{x}^* \in {\pi_i}^{-1} \pi_0 T$.  Consequently this selection process will result in the following probability distribution on $\mathfrak{S}_n$ where the probability, $p_i$, of selecting a given permutation $\pi_i$ is given by:
\begin{equation}\label{eq:id-prob}
p_i = \int_{\pi_i^{-1} \pi_0 T} {\hat g}_0\,d\mu.
\end{equation}
for $\mathbf{x} \in \Omega_r$ and $\mathbf{x} \in \pi_0 T$.

The properties of $\Sigma_U$ as an estimator based on the probability density (\ref{eq:id-prob}) naturally depend on the specifics of the approximation ${\hat g}_0$.  In particular:
\begin{align*}
&\Ex\left[\Sigma_U\right] = \sum_{i=1}^{r-1} \int_{\pi_i^{-1} \pi_0 T} {\hat g}_0\,d\mu \text{ and}\\
&\Var\left[\Sigma_U\right] = \left(\sum_{i=1}^{r-1} \int_{\pi_i^{-1} \pi_0 T} {\hat g}_0\,d\mu\right)
\left(\sum_{i=r}^{D} \int_{\pi_i^{-1} \pi_0 T} {\hat g}_0\,d\mu\right)
\end{align*}
for $\mathbf{x} \in \Omega_r$ and $\mathbf{x} \in \pi_0 T$.

\subsection*{Acknowledgement}
We would like to thank Leonard Scott, Wei Sun, and Yufeng Liu for reviewing the manuscript and providing valuable improvements.


\end{document}